\newtheorem{thm}{Theorem}[section]
\newtheorem{lem}[thm]{Lemma}
\newtheorem{dfn}{Definition}[section]
\renewcommand*\env@matrix[1][*\c@MaxMatrixCols c]{%
  \hskip -\arraycolsep
  \let\@ifnextchar\new@ifnextchar
  \array{#1}}
\tikzset{ampersand replacement=\&}
  \newcommand{\oo}{\infty}
  \newcommand{\B}{\mathcal{B}}
\renewcommand{\d}{\mathrm{d}}
  \newcommand{\D}{\mathcal{D}}
  \newcommand{\coker}{\operatorname{coker}}
  \newcommand{\dalf}{{}^4\square}
  \newcommand{\del}{\partial}
  \newcommand{\gf}{{}^4g}
  \newcommand{\grf}{{}^4\nabla}
  \newcommand{\id}{\mathrm{id}}
  \newcommand{\sqf}{{}^4\square}
  \newcommand{\eps}{\varepsilon}
\newcommand{\superimpose}[2]{%
  {\ooalign{$#1\@firstoftwo#2$\cr\hfil$#1\@secondoftwo#2$\hfil\cr}}}
\newsavebox{\sqzerbox}
\savebox{\sqzerbox}{$\mathpalette\superimpose{{\square}{\raisebox{.2ex}{\scalebox{.7}{0}}}}$}
\newsavebox{\sqonebox}
\savebox{\sqonebox}{$\mathpalette\superimpose{{\square}{\raisebox{.2ex}{\scalebox{.7}{1}}}}$}
\newcommand{\sqone}{\usebox{\sqonebox}}
\newsavebox{\sqtwobox}
\savebox{\sqtwobox}{$\mathpalette\superimpose{{\square}{\raisebox{.2ex}{\scalebox{.7}{2}}}}$}
\newsavebox{\cronebox}
\savebox{\cronebox}{$\mathpalette\superimpose{{\bigcirc}{\raisebox{.1ex}{\scalebox{.7}{1}}}}$}
\newsavebox{\crtwobox}
\savebox{\crtwobox}{$\mathpalette\superimpose{{\bigcirc}{\raisebox{.1ex}{\scalebox{.7}{2}}}}$}
\begin{document}
\title{Explicit triangular decoupling of the separated vector wave equation on Schwarzschild into scalar Regge-Wheeler equations}

\author{Igor Khavkine}

\address{Universit\`a di Milano, Via Cesare Saldini, 50, I-20133 Milano, Italy\\
Istituto Nazionale di Fisica Nucleare -- Sezione di Milano, Via Celoria, 16, I-20133 Milan, Italy}

\ead{igor.khavkine@unimi.it}

\begin{abstract}
We consider the vector wave equation on the Schwarzschild spacetime,
which can be considered as coming from the harmonic (or Lorenz) gauge
fixed Maxwell equations. After a separation of variables, the radial
mode equations form a complicated system of coupled linear ODEs. We
outline a precise abstract strategy to decouple this system into
triangular form, where the diagonal blocks consist of spin-$s$ scalar
Regge-Wheeler equations, with $s=0$ or $1$. This strategy is then
implemented to give an explicit transformation of the radial mode
equations (with nonzero frequency and angular momentum) into this
triangular form. Our decoupling goes a step further than previous
results in the literature by making the triangular form explicit and
reducing it as much as possible. Also, with the help of our abstractly
formulated decoupling strategy, we have significantly streamlined both
the presentation of the final results and the intermediate calculations.
Finally, we note that the vector wave equation is a simple model for
more complicated equations, like harmonic (or de~Donder) gauge fixed
linearized gravity, and backgrounds, like Kerr, where we expect the same
abstract decoupling strategy to work as well.
\end{abstract}

\section{Introduction}

The study of linear wave-like equations on a Schwarzschild black hole
background has a long history and many applications~\cite{chandrasekhar,
frolov-novikov}, both for scalar fields as well as higher rank tensor
fields. In this work, we will concentrate on the vector wave equation.
This equation, besides its purely mathematical significance as a model
for more general tensor wave equations, can be seen as the harmonic (or
Lorenz) gauge-fixed version of Maxwell equations~\cite[Sec.15.1]{ppv},
\cite{rosa-dolan}, or also as the residual gauge equation for harmonic
(or de~Donder) gauge-fixed linearized
gravity~\cite{fewster,bdm,berndtson}.

While the vector wave equation admits a complete separation of variables
on Schwarzschild, the resulting radial mode equations still represent a
complicated system of coupled ordinary differential equations (ODEs) in
the radial variable. What is worse is that, though this ODE system is
formally self-adjoint (it can be put into matrix Sturm-Liouville form),
it naturally defines a symmetric unbounded operator only on a Krein
space (a topological vector space with an indefinite scalar product).
This indefiniteness is directly traced to the indefiniteness of the
Lorentzian background metric and does not occur in the analogous
Riemannian problem. Thus, the abstract spectral theory of symmetric and
self-adjoint operators on a Hilbert space is not applicable in this
case, which significantly complicates important and fundamental
questions about this ODE system. For example: Can we prove or disproved
that its frequency spectrum is purely real? Can we prove that its
generalized eigenfunctions are complete? Can we construct a spectral
representation for the radial Green function?

To be able to answer these questions, one must appeal to some very
special structural properties of the vector wave equation. Namely, in
this work, we will show that the radial mode equations (at least for
modes with non-zero frequency and angular momentum) can be decoupled
into a triangular system, where the diagonal blocks are the spin-$s$
Regge-Wheeler equations, with $s=0$ or $1$. These Regge-Wheeler
equations are then of standard scalar Sturm-Liouville form, with very
well understood spectral properties. Such a decoupling has been
previously discussed in the literature in~\cite{berndtson} and
\cite{rosa-dolan}, though not in as explicit and conceptually clear
terms as we present below. In fact, the full explicit details of how the
original mode equations transform into the decoupled form and back are
not easy to extract from these references. The strategy that we follow
is basically the one of~\cite{berndtson}.%
	\footnote{It is by no means a simple task to extract this strategy
	from~\cite{berndtson}, as it only becomes apparent as the common
	pattern in the detailed and explicit calculations done for a sequence
	of examples of increasing complexity. Still, that work should be
	credited as (to our knowledge) the first to carry out this strategy
	explicitly for the example of the vector wave equation as well as the
	significantly more complicated Lichnerowicz equation, the latter
	corresponding to harmonic gauge fixed linearized gravity, on
	Schwarzschild.} %
Though, our results go a step further, by reducing the resulting
triangular form as much as possible.

In principle, starting from the triangular decoupled form, the spectral
properties of the Regge-Wheeler equations can completely determine the
spectral properties of the original radial mode equations. However, the
details of such a spectral analysis are left for future works.

In Section~\ref{sec:formal} we fix our notation, introduce the formal
properties of differential equations and operators, and formulate our
abstract decoupling strategy. In Section~\ref{sec:vw} we present a
complete separation of variables for the vector wave equation on
Schwarzschild spacetime and apply our strategy to give an explicit
triangular decoupling of the resulting radial mode equations into
spin-$0$ and spin-$1$ Regge-Wheeler scalars. In
Section~\ref{sec:discussion}, we discuss in more detail the novel
aspects of our results, as well as several avenues of further
investigation.

\section{Formal properties of differential equations and operators}\label{sec:formal}

\subsection{Morphisms and cochain maps between differential equations}

For the purposes of this work, a \emph{differential operator}, say $f$,
is always linear, with smooth coefficients, which we will write as
$f[u]$, where $u$ is a possibly vector valued function. Thus $f$ can
always be thought of as a matrix of scalar differential operators acting
on the components of $u$. By a \emph{scalar differential operator}, we
mean an operator that takes possibly vector valued functions into scalar
valued functions (corresponding to a matrix with a single row). Scalars
are real or complex numbers (for the sake of generality we will
generally allow complex scalars). Differential operators could be of any
order, including order zero, which just corresponds to multiplication by
some matrix valued function. While the operators can be thought of
as acting on smooth (vector valued) functions, we will be mostly
concerned with composition identities among operators, and so we will
not bother specifying precisely the domain or codomain of each operator.
This information can always be deduced from the context. Also, for the
abstract discussion below, it is also not necessary to fix the number of
independent variables, but we will be mostly concerned with applications
to ordinary differential operators (i.e.,\ acting on functions of a
single independent variable).

Below, we state some basic definitions and results concerning
differential equations and morphisms between them, which essentially
correspond to differential operators that map solutions to solutions.
The presentation is logically self-contained and does not extend far
beyond what is needed in the rest of the paper. However, for proper
context, these ideas can be seen as simple special cases of the more
general frameworks of $D$-modules~\cite[Sec.10.5]{seiler}, the category
of differential equations~\cite[Sec.VII.5]{vinogradov} or homological
algebra~\cite{weibel}. We will not delve into the precise connection
with these larger frameworks, but it is useful to mention that all
arrows/morphisms need to be reversed when our statements are interpreted
in the language of $D$-modules.

When referring to a morphism between differential equations, we
essentially mean a differential operator that maps solutions to
solutions. Sometimes given a differential equation $e[u] = 0$ we will
also refer to the operator $e$ as the equation. This should not lead to
any confusion.

\begin{dfn} \label{def:morphism}
Consider two differential equations $e[u] = 0$ and $\bar{e}[\bar{u}] =
0$.
\begin{enumerate}
\item[(a)]
	A differential operator $\bar{u} = k[u]$ is a \emph{morphism} from
	$e[u] = 0$ to $\bar{e}[\bar{u}] = 0$ when for any scalar differential
	operator of the form $\bar{p}\circ \bar{e}[\bar{u}]$ there exists a
	scalar differential operator of the form $q\circ e[u]$ such that
	\begin{equation}
		\bar{p}\circ\bar{e}[k[u]] = q\circ e[u] ,
	\end{equation}
	and hence the map $p \mapsto k_p$ given by the identity
	\begin{equation}
		p\circ k = k_p - h_p \circ e
	\end{equation}
	is well-defined on equivalence classes modulo $({\cdots})\circ \bar{e}$
	and $({\cdots})\circ e$ respectively.
\item[(b)]
	A pair of differential operators $\bar{u} = k[u]$ and $\bar{v} = g[v]$
	is a \emph{cochain map} from $e[u] = 0$ to $\bar{e}[\bar{u}] = 0$ when
	they satisfy the identity $\bar{e}\circ k = g\circ \bar{e}$.
\end{enumerate}
\end{dfn}
It is easy to check that a \emph{cochain map} is always a
\emph{morphism} and, as we will see shortly, a \emph{morphism} always
gives rise to a \emph{cochain map}, but not necessarily uniquely. The
two definitions are complementary to each other. Since a cochain map
comes with more structure, it is often more convenient for algebraic
manipulations. On the other hand, some properties of morphisms may be
easier to check. We illustrate morphisms and cochain maps respectively
as
\begin{equation}
\begin{tikzcd}[column sep=large,row sep=large]
	\bullet \ar{d}{e} \ar{r}{k} \&
	\bullet \ar[swap]{d}{\bar e} \\
	\bullet \ar[dashed]{r} \&
	\bullet
\end{tikzcd}
	\quad \text{and} \quad
\begin{tikzcd}[column sep=large,row sep=large]
	0 \ar{d} \& 0 \ar{d} \\
	\bullet \ar{d}{e} \ar{r}{k} \&
	\bullet \ar[swap]{d}{\bar e} \\
	\bullet \ar{d} \ar[swap]{r}{g} \&
	\bullet \ar{d} \\
	0 \& 0
\end{tikzcd} \, .
\end{equation}
The second diagram is meant to illustrate the terminology \emph{cochain
map}, which comes from homological algebra~\cite{weibel}. There, the primary
objects are \emph{cochain complexes}, which are sequences of linear
spaces (or more generally modules, or even just abelian groups)
connected by linear maps, of which any two successive ones compose to
zero. What is of interest about a cochain complex is its
\emph{cohomology} (the quotient of the kernel of one linear map by the
image of the preceding one). In our case the property of being a cochain
complex is trivially satisfied by the columns of the second diagram and
the cohomologies correspond to the spaces $\ker e$ and $\coker e$ of our
differential operator $e$. Morphisms between cochain complexes are
sequences of maps that commute with the linear maps of each complex. In
our case, this property is illustrated by the commutativity (the
identity $\bar{e}\circ k = g\circ e$) of the middle square of the second
diagram. It is straight forward to check that cochain maps induce
well-defined maps on cohomology, which in our case corresponds to $k$
mapping solutions of $e[u] = 0$ ($u\in \ker e$) to solutions of
$\bar{e}[\bar{u}] = 0$ ($u\in \ker\bar{e}$). In the sequel, we will
always be concerned with just the properties of the middle square of the
second diagram, hence the zero maps (illustrated above and below the
middle square) will from now on be omitted and left implicit.

\begin{lem} \label{lem:morphism-induces-map}
A morphism $k$ from $e[u] = 0$ to $\bar{e}[\bar{u}] = 0$ always induces a
cochain map $k,g$.
\end{lem}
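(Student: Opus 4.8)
The plan is to extract the operator $g$ directly from the morphism condition in Definition~\ref{def:morphism}(a), applied one scalar row of $\bar e$ at a time. The key observation is that a vector-valued differential operator is nothing more than a finite tuple of scalar-valued ones: if $\bar e[\bar u]$ has $m$ components, write $\bar e$ as the operator whose rows are the scalar differential operators $\bar e_1, \dots, \bar e_m$, so that $\bar e_i = \pi_i \circ \bar e$, where $\pi_i$ is the (order-zero) covector projecting onto the $i$-th component.

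First I would apply Definition~\ref{def:morphism}(a) with $\bar p = \pi_i$, for each $i = 1, \dots, m$. Since $\pi_i \circ \bar e$ is a scalar differential operator, the hypothesis that $k$ is a morphism produces a scalar differential operator $q_i$ (a row matching the number of components of $e[u]$) with
\[
\bar e_i \circ k \;=\; \pi_i \circ \bar e \circ k \;=\; q_i \circ e .
\]
Next I would assemble $g$ as the differential operator whose $i$-th row is $q_i$. Stacking the $m$ identities above yields $\bar e \circ k = g \circ e$, which is exactly the cochain-map identity of Definition~\ref{def:morphism}(b); hence $(k, g)$ is a cochain map from $e[u] = 0$ to $\bar e[\bar u] = 0$, as claimed.

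There is no real obstacle here beyond bookkeeping the matrix shapes correctly; the content of the lemma is simply the remark that the a priori ``scalar-only'' morphism condition already pins down the full vector-valued composite $\bar e \circ k$. The one genuine subtlety --- and the source of the ``not necessarily uniquely'' in the surrounding discussion --- is that each $q_i$, and hence $g$, is determined only up to operators $r$ with $r \circ e = 0$; different admissible choices all yield valid cochain maps completing $k$. (When $e$ is a single scalar ODE there are no such nonzero $r$, so in that case $g$ is unique.)
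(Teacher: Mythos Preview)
Your argument is correct and is essentially the paper's proof spelled out more carefully: the paper simply takes $\bar{p} = \id$ in Definition~\ref{def:morphism}(a) to extract $g$ in one stroke, whereas your component-by-component choice $\bar{p} = \pi_i$ makes this rigorous when $\bar{e}$ is genuinely vector-valued. Your closing remark on non-uniqueness also matches the paper's discussion.
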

\begin{proof}
Choosing $\bar{p} = \id$ in the definition of a morphism
(Definition~\ref{def:morphism}(a)), implies that there exists an
operator $g$ such that $\bar{e}\circ k = g\circ e$. The last identity
also implies that $k,g$ is a cochain map
(Definition~\ref{def:morphism}(b)).
\end{proof}
Of course in general the operator $g$ completing the morphism $k$ to a
cochain map may not be unique. The next two definitions introduce some
parallel properties of morphisms and cochain maps.

\begin{dfn}
A morphism $k$ from $e[u] = 0$ to $\bar{e}[\bar{u}] = 0$ is called
\emph{on-shell injective (surjective)} if the map $p \mapsto k_p$ is
surjective (injective) from equivalence classes modulo $({\cdots})\circ
\bar{e}$ to equivalence classes modulo $({\cdots})\circ e$. We say
\emph{on-shell bijective} to mean both on-shell injective and
surjective. We say that $k$ is \emph{on-shell vanishing} when $p \circ
k$ belongs to the same equivalence class as $0$ modulo $({\cdots})\circ
e$ for any operator $p$.
\end{dfn}

\begin{dfn}
A cochain map $k,g$ from $e[u] = 0$ to $\bar{e}[\bar{u}] = 0$ is said to
be \emph{induced by a homotopy} if $k = h\circ e$ and $g = \bar{e}\circ
h$ for some operator $h$ referred to as a \emph{(cochain) homotopy}:
\begin{equation}
\begin{tikzcd}[column sep=large,row sep=large]
	\bullet \ar[swap]{d}{e} \ar{r}{k=h\circ e} \&
	\bullet \ar{d}{\bar{e}} \\
	\bullet \ar[swap]{r}{g=\bar{e}\circ h} \ar[dashed]{ur}{h} \&
	\bullet
\end{tikzcd} \, .
\end{equation}
We say that $k,g$ has a \emph{left (right) inverse up to homotopy} if
there exists a cochain map $\bar{k},\bar{g}$ from $\bar{e}[\bar{u}] = 0$
to $e[u] = 0$ such that
\begin{equation*}
	\begin{aligned}
		\bar{k} \circ k &= \id - h\circ e \\
		\bar{g} \circ g &= \id - e\circ h
	\end{aligned}
	\quad
	\left(\begin{aligned}
		k\circ \bar{k} &= \id - \bar{h}\circ \bar{e} \\
		g\circ \bar{g} &= \id - \bar{e}\circ \bar{h}
	\end{aligned}\right)
\end{equation*}
for some homotopy $h$ of $e[u] = 0$ ($\bar{h}$ of $\bar{e}[\bar{u}] =
0$) into itself. A cochain map $k,g$ that has a two-sided inverse up to
homotopy is called an \emph{equivalence up to homotopy}.
\end{dfn}

The last two sets of definitions are related but not exactly equivalent.
In general, the existence of a left or right inverse is a stronger
property than being on-shell injective or surjective. Though, under
certain hypotheses, the link could be made stronger.
\begin{dfn} \label{def:determined}
Consider a differential equation $e[u] = 0$. Any operator $p$ such that
$e\circ p = 0$ is called a \emph{gauge symmetry (generator)} of $e$. Any
operator $q$ such that $q\circ e = 0$ is called a \emph{Noether
identity} (or \emph{compatibility operator}) of $e$. The equation is
said to be \emph{under-determined} if it has a non-zero gauge symmetry
and \emph{over-determined} if it has a non-zero Noether identity. The
equation is said to be \emph{over-under-determined} if it has both
non-zero gauge symmetries and Noether identities and \emph{determined}
(or \emph{normal}) if it has neither.
\end{dfn}

\begin{lem} \label{lem:morphisms-maps}
Consider equations $e[u] = 0$ and $\bar{e}[\bar{u}] = 0$.
\begin{enumerate}
\item[(a)]
	When $e$ is not over-determined, a morphism $\bar{u} = k[u]$ induces a
	\emph{unique} cochain map $k,g$.
\item[(b)]
	A cochain map $k,g$ has a left (right) inverse up to homotopy $\implies$
	$k$ is on-shell injective (surjective). If $k,g$ is induced by a
	homotopy, then $k$ is on-shell vanishing.
\item[(c)]
	When a morphism $\bar{u} = k[u]$ is on-shell bijective, it has a
	two-sided inverse morphism $u = \bar{k}[\bar{u}]$.
\item[(d)]
	When neither $e$ nor $\bar{e}$ is over-determined, an on-shell
	bijective morphism $\bar{u} = k[u]$ induces an equivalence up to
	homotopy. More precisely, it induces a cochain map $k,g$ (with $g$
	unique) and a two-sided inverse up to homotopy $\bar{k},\bar{g}$.
\end{enumerate}
\end{lem}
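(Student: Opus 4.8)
The plan is to first translate every notion in Definitions~\ref{def:morphism}--\ref{def:determined} into plain statements about composition of operators modulo $(\cdots)\circ e$ and $(\cdots)\circ\bar e$, and then to run a short sequence of diagram chases. Recall (as noted after Lemma~\ref{lem:morphism-induces-map}) that $k$ being a morphism from $e$ to $\bar e$ is the same as the existence of some $g$ with $\bar e\circ k=g\circ e$, and that the induced map on equivalence classes, which I will write $k^*\colon[p]\mapsto[p\circ k]$, sends scalar operators on $\bar u$ modulo $(\cdots)\circ\bar e$ to scalar operators on $u$ modulo $(\cdots)\circ e$ --- note the reversal of direction. Under this dictionary, ``$k$ on-shell injective / surjective / vanishing'' become ``$k^*$ surjective / $k^*$ injective / $k^*=0$''. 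Keeping this contravariance straight is the one genuinely error-prone piece of bookkeeping; the rest is routine algebra.

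Parts (a) and (b) are then quick. For (a), if $\bar e\circ k=g\circ e=g'\circ e$ then $(g-g')\circ e=0$, so $g-g'$ is a Noether identity of $e$ and hence vanishes when $e$ is not over-determined. For (b), if $\bar k\circ k=\id-h\circ e$ then for any scalar $p'$ on $u$ the operator $p:=p'\circ\bar k$ gives $p\circ k=p'-p'\circ h\circ e\equiv p'$ modulo $(\cdots)\circ e$, so $k^*$ is onto; if $k\circ\bar k=\id-\bar h\circ\bar e$ and $(\bar k,\bar g)$ is a cochain map (so $e\circ\bar k=\bar g\circ\bar e$), then a class with $p\circ k=h_p\circ e$ satisfies $p=p\circ k\circ\bar k+p\circ\bar h\circ\bar e=h_p\circ\bar g\circ\bar e+p\circ\bar h\circ\bar e\equiv0$ modulo $(\cdots)\circ\bar e$, so $k^*$ is injective; and $k=h\circ e$ gives $p\circ k=(p\circ h)\circ e\equiv0$ for every $p$, i.e.\ $k^*=0$.

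Part (c) is the crux, and I expect it to be the main obstacle: one has to resist obtaining cochain/homotopy data for free and instead build $\bar k$ by hand and check the morphism property through $k^*$. Writing $E_i$ for the $i$-th component extraction on $u$-valued functions, surjectivity of $k^*$ yields scalar operators $p_i$ on $\bar u$ with $p_i\circ k=E_i-s_i\circ e$; taking the $p_i$ as the rows of an operator $\bar k$ and the $s_i$ as the rows of $s$ gives $\bar k\circ k=\id-s\circ e$. To see $\bar k$ is a morphism from $\bar e$ to $e$, note that each row $r_j$ of $e\circ\bar k$ has $r_j\circ k$ equal to a row of $e\circ\bar k\circ k=(\id-e\circ s)\circ e$, hence $k^*[r_j]=0$; injectivity of $k^*$ --- the on-shell-surjective half of the hypothesis --- then forces $r_j\in(\cdots)\circ\bar e$, so $e\circ\bar k$ factors through $\bar e$. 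A parallel computation, comparing a row $t_j$ of $k\circ\bar k$ with $E_j$ by means of $k\circ\bar k\circ k=k\circ(\id-s\circ e)$ and again using injectivity of $k^*$, gives $k\circ\bar k=\id-\bar s\circ\bar e$. Thus $\bar k$ is a two-sided inverse morphism, as required.

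Part (d) just assembles the pieces. By (a), applied to $e$ and to $\bar e$ (neither over-determined), the morphisms $k$ and the inverse $\bar k$ from (c) complete to unique cochain maps $(k,g)$ and $(\bar k,\bar g)$, and it remains to upgrade the on-shell identities $\bar k\circ k=\id-s\circ e$ and $k\circ\bar k=\id-\bar s\circ\bar e$ to the homotopy identities defining an equivalence up to homotopy. For the left inverse I would compute $(\bar g\circ g)\circ e=\bar g\circ\bar e\circ k=e\circ\bar k\circ k=(\id-e\circ s)\circ e$, so that $(\bar g\circ g-\id+e\circ s)\circ e=0$; since $e$ is not over-determined this gives $\bar g\circ g=\id-e\circ s$, which with $\bar k\circ k=\id-s\circ e$ is precisely the left-inverse-up-to-homotopy pair with homotopy $h=s$. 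The symmetric computation on the $\bar e$ side, using that $\bar e$ is not over-determined, gives $g\circ\bar g=\id-\bar e\circ\bar s$ and hence the right-inverse pair with $\bar h=\bar s$; therefore $(\bar k,\bar g)$ is a two-sided inverse up to homotopy and $(k,g)$ an equivalence up to homotopy. The not-over-determined hypotheses enter exactly as the cancellation of a trailing $e$ (resp.\ $\bar e$), the same cancellation that gave uniqueness of $g$ in (a).
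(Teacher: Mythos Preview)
Your proof is correct and follows essentially the same route as the paper's: the same cancellation argument for (a), the same precomposition-with-$\bar k$ and postcomposition-with-$\bar k$ computations for (b), the same construction of $\bar k$ from on-shell injectivity plus use of on-shell surjectivity to verify the morphism property and the other inverse relation in (c), and the same $(\bar g\circ g)\circ e=(\id-e\circ s)\circ e$ cancellation for (d). The only differences are presentational---you make the contravariant pullback $k^*$ explicit and build $\bar k$ row by row, and you verify the morphism property of $\bar k$ before the identity $k\circ\bar k=\id-\bar s\circ\bar e$ rather than after---but the underlying algebra is identical.
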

\begin{proof}
\begin{itemize}
\item[(a)]
	From Lemma~\ref{lem:morphism-induces-map}, we already know that there
	exists at least one operator $g$ such that $k,g$ is a cochain map. Let
	$g'$ be another such operator. Then $(g' - g)\circ e = \bar{e}\circ
	(k-k) = 0$. Hence, if $e$ has no non-vanishing Noether identities
	(Definition~\ref{def:determined}), we must have $g' = g$.
\item[(b)]
	Suppose that $\bar{k},\bar{g}$ is a left inverse to $k,g$ up to
	homotopy. Then, for any operator $q[u]$, the operator $\bar{p} = q\circ
	\bar{k}$ satisfies
	\begin{equation}
		\bar{p} \circ k
		= q \circ (\bar{k} \circ k)
		= q - (q\circ h) \circ e \, .
	\end{equation}
	Hence $k$ is on-shell injective.

	On the other hand, suppose that
	$\bar{k},\bar{g}$ is a right inverse to $k,g$ up to homotopy. Then any
	operator $\bar{p}$ such that $\bar{p}\circ k = q\circ e$, we have the
	identity
	\begin{equation}
		p \circ (\id - \bar{h} \circ \bar{e})
		= (p\circ k)\circ \bar{k}
		= q\circ (e \circ \bar{k})
		= (q\circ \bar{g}) \circ \bar{e} \, .
	\end{equation}
	Hence we must have $\bar{p} = (q\circ\bar{g} + p\circ h)\circ \bar{e}$
	and therefore $k$ is on-shell surjective.

	Finally if $k = h\circ e$ and $g = \bar{e}\circ h$, then for any
	operator $p$ we have $p\circ k = (p\circ h) \circ e$, meaning that $k$
	is on-shell vanishing.
\item[(c)]
	By on-shell injectivity, there must exist operators $\bar{k}$ and $h$
	such that $\bar{k} \circ k = \id - h\circ e$. Then
	\begin{equation}
		(\id - k\circ \bar{k}) \circ k
		= k\circ (\id - \bar{k}\circ k)
		= (k\circ h) \circ e \, .
	\end{equation}
	Hence, by on-shell surjectivity, there must exist an operator
	$\bar{h}$ such that $k\circ \bar{k} = \id - \bar{h} \circ \bar{e}$.
	Now, it remains only to check that $\bar{k}$ is actually a morphism.
	For that, note the identity
	\begin{equation}
		(e\circ \bar{k}) \circ k
		= e\circ (\bar{k}\circ k)
		= e - e\circ h \circ e
		= (\id - e\circ h) \circ e \, .
	\end{equation}
	But by on-shell surjectivity of $k$ this means that there exists an
	operator $\bar{g}$ such that $e\circ \bar{k} = \bar{g} \circ \bar{e}$,
	meaning that $\bar{k},\bar{g}$ is a cochain map and hence $\bar{k}$
	itself is a morphism that is a two-sided on-shell inverse to $k$.
\item[(d)]
	By part (c), we know that an on-shell bijective morphism $k$ has a two
	sided inverse morphism $\bar{k}$, satisfying
	\begin{equation}
		\bar{k} \circ k = \id - h\circ e
		\quad \text{and} \quad
		k\circ\bar{k} = \id - \bar{h}\circ \bar{e} \, ,
	\end{equation}
	for some operators $h$ and $\bar{h}$. By part (a), they can both be
	uniquely completed to cochain maps, $k,g$ and $\bar{k},\bar{g}$. It
	remains to check the relation between $g$ and $\bar{g}$. Consider the
	identities
	\begin{align}
		(\bar{g}\circ g) \circ e
		&= \bar{g} \circ (g\circ e)
		= (\bar{g} \circ \bar{e}) \circ k
		= e \circ (\bar{k} \circ k)
		= (\id - e\circ h) \circ e  \, , \\
		(g\circ \bar{g}) \circ \bar{e}
		&= g \circ (\bar{g}\circ \bar{e})
		= (g \circ e) \circ \bar{k}
		= \bar{e} \circ (k \circ \bar{k})
		= (\id - \bar{e}\circ \bar{h}) \circ \bar{e}  \, .
	\end{align}
	Since neither $e$ nor $\bar{e}$ has non-vanishing Noether identities,
	we must conclude that
	\begin{equation}
		\bar{g} \circ g = \id - e\circ h
		\quad \text{and} \quad
		g\circ\bar{g} = \id - \bar{e}\circ \bar{h} \, .
	\end{equation}
	Hence, the cochain map $\bar{k},\bar{g}$ is a two-sided to $k,g$ up to
	homotopy.
\end{itemize}
\end{proof}

\subsection{Block-triangular decoupling}

In this section, we show some equations can be transformed (or
\emph{decoupled}) into block triangular form. These results will be the
building blocks of our abstract decoupling strategy, which will be
discussed in Section~\ref{sec:decoupling}. Note that we do not state
these results in their most general form, but only in sufficient
generality to be useful later on.

\begin{lem} \label{lem:induce-triang}
Consider the differential equations $e[u] = 0$ and $f[v] = 0$. A
morphism $v = k[u]$ induces an
on-shell bijective morphism from $e[u] = 0$ to $\bar{e}[\bar{u}] = 0$,
with $\bar{u} = [\begin{smallmatrix} u \\ v \end{smallmatrix}]$ and
$\bar{e}$ in upper block-triangular form, namely
\begin{equation}
\begin{tikzcd}[column sep=large,row sep=large]
	\bullet \ar[swap]{d}{e} \ar{r}{k} \&
	\bullet \ar{d}{f} \\
	\bullet \ar[swap]{r}{g} \&
	\bullet
\end{tikzcd}
	\quad \implies \quad
\begin{tikzcd}[column sep=large,row sep=large]
	\bullet \ar[swap]{d}{e} \ar{r}{k
		= \begin{bmatrix}
			\id \\
			\cmidrule(lr){1-1}
			k
		\end{bmatrix}} \&
	\bullet \ar{d}{\bar e
		= \begin{bmatrix}[c|c]
			e & 0 \\
			k & -\id \\
			\cmidrule(lr){1-2}
			0 & f
		\end{bmatrix}} \\
	\bullet \ar[swap]{r}{\begin{bmatrix}
			\id \\
			0 \\
			\cmidrule(lr){1-1}
			g
		\end{bmatrix}} \&
	\bullet
\end{tikzcd}
	\quad \iff \quad
\begin{tikzcd}[column sep=large,row sep=large]
	\bullet
		\ar[swap]{d}{\begin{bmatrix}[c|c]
			e & 0 \\
			k & -\id \\
			\cmidrule(lr){1-2}
			0 & f
		\end{bmatrix}}
		\ar{r}{\begin{bmatrix}[c|c] \id & 0 \end{bmatrix}}
	\&
	\bullet
		\ar{d}{e}
	\\
	\bullet
		\ar[swap]{r}{\begin{bmatrix}[cc|c] \id & 0 & g \end{bmatrix}}
	\&
	\bullet
\end{tikzcd} ,
\end{equation}
where the last two morphisms are mutually on-shell inverse.
\end{lem}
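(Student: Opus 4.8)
My plan is to verify the statement by a handful of block-matrix computations, the only genuinely non-formal input being the cochain-map relation provided by Lemma~\ref{lem:morphism-induces-map}. Since $v = k[u]$ is a morphism from $e[u]=0$ to $f[v]=0$, that lemma furnishes an operator $g$ with $f\circ k = g\circ e$ --- precisely the commuting square drawn on the left of the displayed implication --- and I would use $g$ only through this one identity.

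Abbreviate $\bar e = [\begin{smallmatrix} e & 0 \\ k & -\id \\ 0 & f\end{smallmatrix}]$, acting on $\bar u = [\begin{smallmatrix}u\\v\end{smallmatrix}]$, so that $\bar e[\bar u] = (e[u],\ k[u]-v,\ f[v])$. I would first check that $[\begin{smallmatrix}\id\\k\end{smallmatrix}]$ is a morphism from $e$ to $\bar e$ by exhibiting an accompanying cochain map; the computation
\begin{equation*}
\bar e\circ\begin{bmatrix}\id\\k\end{bmatrix}
 = \begin{bmatrix} e \\ k-k \\ f\circ k\end{bmatrix}
 = \begin{bmatrix} e \\ 0 \\ g\circ e\end{bmatrix}
 = \begin{bmatrix}\id\\0\\g\end{bmatrix}\circ e
\end{equation*}
does this, the middle equality being the single place where $f\circ k = g\circ e$ is invoked. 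Dually, $e\circ[\begin{smallmatrix}\id & 0\end{smallmatrix}] = [\begin{smallmatrix} e & 0\end{smallmatrix}] = [\begin{smallmatrix}\id & 0 & 0\end{smallmatrix}]\circ\bar e$, so $[\begin{smallmatrix}\id & 0\end{smallmatrix}]$ is a morphism from $\bar e$ to $e$. (The companion completing this second cochain map is not unique, because $\bar e$ is over-determined in the sense of Definition~\ref{def:determined} --- it carries the Noether identity $[\begin{smallmatrix}-g & f & \id\end{smallmatrix}]$ --- but the decoupling asserted by the theorem does not depend on which completion is chosen.)

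It remains to show the two morphisms are mutually on-shell inverse. One composite is exact, $[\begin{smallmatrix}\id & 0\end{smallmatrix}]\circ[\begin{smallmatrix}\id\\k\end{smallmatrix}] = \id$, so its homotopy is zero. For the other, $[\begin{smallmatrix}\id\\k\end{smallmatrix}]\circ[\begin{smallmatrix}\id & 0\end{smallmatrix}] = [\begin{smallmatrix}\id & 0\\k & 0\end{smallmatrix}]$, hence $\id - [\begin{smallmatrix}\id\\k\end{smallmatrix}]\circ[\begin{smallmatrix}\id & 0\end{smallmatrix}] = [\begin{smallmatrix}0 & 0\\ -k & \id\end{smallmatrix}]$, and I would check that the homotopy $\bar h = [\begin{smallmatrix}0 & 0 & 0\\ 0 & -\id & 0\end{smallmatrix}]$ satisfies $\bar h\circ\bar e = [\begin{smallmatrix}0 & 0\\ -k & \id\end{smallmatrix}]$ --- one further block product. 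With both composition-up-to-homotopy identities in hand, Lemma~\ref{lem:morphisms-maps}(b) makes $[\begin{smallmatrix}\id\\k\end{smallmatrix}]$ on-shell injective and surjective (and the same argument with the roles swapped does the same for $[\begin{smallmatrix}\id & 0\end{smallmatrix}]$), which is exactly the assertion that the two morphisms are mutually on-shell inverse; $\bar e$ is manifestly in upper block-triangular form.

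Almost all of this is bookkeeping. The one step that takes a moment's thought is guessing the homotopy $\bar h$, but even there the idea is transparent: the sole extra datum $\bar e$ carries over $e$ is the auxiliary variable $v$, pinned to $k[u]$ by the middle row of $\bar e$, so $\bar h$ only has to undo that row. I do not foresee any genuine obstacle.
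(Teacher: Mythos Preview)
Your proof is correct, and in fact the paper states this lemma without proof, so there is nothing to compare against beyond the expected routine verification. Your block-matrix checks of the two commuting squares and the homotopy $\bar h = \left[\begin{smallmatrix}0&0&0\\0&-\id&0\end{smallmatrix}\right]$ are exactly what is needed, and the appeal to Lemma~\ref{lem:morphisms-maps}(b) closes the argument cleanly.

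One small remark on your parenthetical about the backward companion: your choice $\left[\begin{smallmatrix}\id & 0 & 0\end{smallmatrix}\right]$ does make the square commute, and indeed this is the companion the paper itself uses when it later applies the lemma (e.g.\ in~\eqref{eq:E-T-triang}). The label $\left[\begin{smallmatrix}\id & 0 & g\end{smallmatrix}\right]$ printed in the lemma statement does \emph{not} actually commute with $e\circ[\,\id\ 0\,]$ unless $g\circ f=0$, so it appears to be a minor slip in the paper rather than an alternative valid choice; your Noether-identity observation explains why multiple companions exist, but $\left[\begin{smallmatrix}\id & 0 & g\end{smallmatrix}\right]$ is not one of them.
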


We say that the morphism $k$ \emph{decouples} $f$ from $e$. Once $f$ has
been decoupled into a lower diagonal block, any transformations of
remaining upper diagonal block transform the block-triangular form as
follows.

\begin{lem} \label{lem:top-triang-inv}
Consider the equations $e[u] = 0$, $\bar{e}[\bar{u}] = 0$, $f[v] = 0$,
and
\begin{equation}
	\begin{bmatrix}
		e & \Delta \\
		0 & f
	\end{bmatrix}
	\begin{bmatrix}
		u \\
		v
	\end{bmatrix}
	= 0 \, ,
\end{equation}
where the operator $\Delta$ has the property that for any Noether
identity $n\circ e = 0$ we have $n\circ \Delta = m\circ f$ for some
operator $m$. An on-shell bijective morphism $k$ between $e$ and
$\bar{e}$, which more precisely satisfies the identities (which are
slightly weaker than equivalence up to homotopy)
\begin{equation}
\begin{tikzcd}[column sep=large,row sep=large]
	\bullet \ar{d}{e} \ar[shift left]{r}{k} \&
	\bullet \ar[swap]{d}{\bar e} \ar[shift left]{l}{\bar k}
	\\
	\bullet \ar[shift left]{r}{g} \&
	\bullet \ar[shift left]{l}{\bar g}
\end{tikzcd} \, ,
	\qquad
	\begin{aligned}
		\bar{k} \circ k &= \id - h\circ e \, , \\
		\bar{g} \circ g &= \id - e\circ h - n \, .
	\end{aligned}
\end{equation}
with a Noether identity $n\circ e = 0$, induces the following mutually
on-shell bijective morphisms of equations in triangular form, where
$\bar{\Delta} = g\circ \Delta$:
\begin{equation}
\begin{tikzcd}[column sep=large,row sep=large]
	\bullet
		\ar[swap]{d}{\begin{bmatrix}
			e & \Delta \\
			0 & f
		\end{bmatrix}}
		\ar{r}{\begin{bmatrix} k & 0 \\ 0 & \id \end{bmatrix}}
	\&
	\bullet
		\ar{d}{\begin{bmatrix}
			\bar{e} & \bar{\Delta} \\
			0 & f
		\end{bmatrix}}
	\\
	\bullet
		\ar[swap]{r}{\begin{bmatrix} g & 0 \\ 0 & \id \end{bmatrix}}
	\&
	\bullet
\end{tikzcd}
	\quad \iff \quad
\begin{tikzcd}[column sep=large,row sep=large]
	\bullet
		\ar[swap]{d}{\begin{bmatrix}
			\bar{e} & \bar{\Delta} \\
			0 & f
		\end{bmatrix}}
		\ar{r}{\begin{bmatrix} \bar{k} & -h\circ\Delta \\ 0 & \id \end{bmatrix}}
	\&
	\bullet
		\ar{d}{\begin{bmatrix}
			e & \Delta \\
			0 & f
		\end{bmatrix}}
	\\
	\bullet
		\ar[swap]{r}{\begin{bmatrix} \bar{g} & m \\ 0 & \id \end{bmatrix}}
	\&
	\bullet
\end{tikzcd} \, .
\end{equation}
\end{lem}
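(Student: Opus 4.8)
The plan is to verify directly that the two displayed pairs of block operators are cochain maps between the two triangular equations, then to compute their two composites and show that each equals the identity modulo an on-shell‑vanishing term; mutual on-shell bijectivity of the two morphisms then follows exactly as in the proof of Lemma~\ref{lem:morphisms-maps}(b). Everything reduces to $2\times2$ block‑matrix identities assembled from the given data: the two squares of the displayed diagram, i.e.\ $\bar e\circ k = g\circ e$ and $e\circ\bar k = \bar g\circ\bar e$, together with $\bar k\circ k = \id - h\circ e$, $\bar g\circ g = \id - e\circ h - n$ where $n\circ e = 0$, and $\bar\Delta = g\circ\Delta$. The hypothesis on $\Delta$ will be invoked exactly twice, each time to convert a term $n''\circ\Delta$ with $n''\circ e = 0$ into a term $m''\circ f$.

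Write $E = \bigl[\begin{smallmatrix}e&\Delta\\0&f\end{smallmatrix}\bigr]$, $\bar E = \bigl[\begin{smallmatrix}\bar e&\bar\Delta\\0&f\end{smallmatrix}\bigr]$, and let $K,G$ and $\bar K,\bar G$ denote the forward and backward pairs displayed in the statement. The identity $\bar E\circ K = G\circ E$ is immediate from $\bar e\circ k = g\circ e$ and $\bar\Delta = g\circ\Delta$, so $K,G$ is a cochain map from $E$ to $\bar E$. For $E\circ\bar K = \bar G\circ\bar E$, the diagonal blocks agree because $e\circ\bar k = \bar g\circ\bar e$, while the upper‑right blocks are $(\id - e\circ h)\circ\Delta$ on the left and, using $\bar\Delta = g\circ\Delta$ and $\bar g\circ g = \id - e\circ h - n$, equal $(\id - e\circ h - n)\circ\Delta + m\circ f$ on the right; equality therefore amounts to $n\circ\Delta = m\circ f$, and since $n\circ e = 0$ the hypothesis on $\Delta$ supplies exactly such an $m$ — the one named in the statement. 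Hence $\bar K,\bar G$ is a cochain map from $\bar E$ to $E$.

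For the composites, one direction is exact: $\bar K\circ K = \bigl[\begin{smallmatrix}\bar k\circ k & -h\circ\Delta\\0&\id\end{smallmatrix}\bigr] = \id - \bigl[\begin{smallmatrix}h&0\\0&0\end{smallmatrix}\bigr]\circ E$, by $\bar k\circ k = \id - h\circ e$. The other composite, $K\circ\bar K = \bigl[\begin{smallmatrix}k\circ\bar k & -k\circ h\circ\Delta\\0&\id\end{smallmatrix}\bigr]$, is the main obstacle and needs two steps. First, from $\bar k\circ k = \id - h\circ e$ one reads off $(k\circ\bar k - \id)\circ k = -k\circ h\circ e$, so $k\circ\bar k - \id$, precomposed with $k$, lies in $(\cdots)\circ e$; since $k$ is on-shell bijective, hence on-shell surjective (or by Lemma~\ref{lem:morphisms-maps}(c)), this forces $k\circ\bar k = \id - \bar h\circ\bar e$ for some operator $\bar h$. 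Second, computing $k\circ\bar k\circ k$ in two ways — as $k - k\circ h\circ e$ from the first relation, and as $k - \bar h\circ g\circ e$ using $k\circ\bar k = \id - \bar h\circ\bar e$ together with $\bar e\circ k = g\circ e$ — gives $(k\circ h - \bar h\circ g)\circ e = 0$, so $n' := k\circ h - \bar h\circ g$ is a Noether identity of $e$. Invoking the hypothesis on $\Delta$ a second time, now for $n'$, yields $m'$ with $n'\circ\Delta = m'\circ f$; then $\bar h\circ\bar\Delta + m'\circ f = \bar h\circ g\circ\Delta + n'\circ\Delta = k\circ h\circ\Delta$, whence $K\circ\bar K = \id - \bigl[\begin{smallmatrix}\bar h & m'\\0&0\end{smallmatrix}\bigr]\circ\bar E$.

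Thus $K,G$ and $\bar K,\bar G$ are cochain maps whose composites in either order are the identity modulo $(\cdots)\circ E$ and $(\cdots)\circ\bar E$ respectively; the argument in the proof of Lemma~\ref{lem:morphisms-maps}(b) then shows that $K$ and $\bar K$ are on-shell injective and surjective, hence on-shell bijective and mutually on-shell inverse, which is the assertion. (One also finds $\bar G\circ G = \id - E\circ\bigl[\begin{smallmatrix}h&0\\0&0\end{smallmatrix}\bigr] - \bigl[\begin{smallmatrix}n & -m\\0&0\end{smallmatrix}\bigr]$, and $\bigl[\begin{smallmatrix}n & -m\\0&0\end{smallmatrix}\bigr]$ is a Noether identity of $E$ precisely because $n\circ\Delta = m\circ f$; so the conclusion reproduces the ``slightly weaker than equivalence up to homotopy'' relations assumed in the hypothesis, and the lemma can be iterated.) The single delicate point is the second step above: the obstruction $k\circ h - \bar h\circ g$ to an exact two‑sided inverse is itself a Noether identity of $e$, and it is exactly this that lets the special property assumed of $\Delta$ do its job once more.
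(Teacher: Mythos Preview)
Your proof is correct and follows the same approach as the paper's: verify that both displayed squares commute (cochain maps), then check that the two morphisms are mutually inverse up to homotopy. The paper's proof is considerably terser --- after writing out the two block-matrix identities that establish the cochain-map property, it simply asserts that mutual inverse up to homotopy is ``straight forward to see'' --- whereas you spell out the $K\circ\bar K$ direction in full, correctly observing that it requires first using on-shell bijectivity of $k$ to produce an operator $\bar h$ with $k\circ\bar k = \id - \bar h\circ\bar e$, and then a \emph{second} application of the hypothesis on $\Delta$ (to the Noether identity $n' = k\circ h - \bar h\circ g$). Your parenthetical remark that the lemma reproduces its own hypotheses, so that it can be iterated, is also a useful observation not made explicit in the paper.
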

Of course, we should note that the block-triangular equations obtained
in the earlier Lemma~\ref{lem:induce-triang} satisfy the hypotheses of
Lemma~\ref{lem:top-triang-inv}.
\begin{proof}
Consider the following identities:
\begin{equation}
	\begin{bmatrix}
		\bar{e} & \bar{\Delta} \\
		0 & f
	\end{bmatrix}
	\circ
	\begin{bmatrix}
		k & 0 \\
		0 & \id
	\end{bmatrix}
	-
	\begin{bmatrix}
		g & 0 \\
		0 & \id
	\end{bmatrix}
	\circ
	\begin{bmatrix}
		e & \Delta \\
		0 & f
	\end{bmatrix}
	=
	\begin{bmatrix}
		\bar{e}\circ k - g\circ e & \bar{\Delta} - g\circ \Delta \\
		0 & f - f
	\end{bmatrix} \, ,
\end{equation}
\begin{multline}
	\begin{bmatrix}
		e & \Delta \\
		0 & f
	\end{bmatrix}
	\circ
	\begin{bmatrix}
		\bar{k} & -h\circ\Delta \\
		0 & \id
	\end{bmatrix}
	-
	\begin{bmatrix}
		\bar{g} & m \\
		0 & \id
	\end{bmatrix}
	\circ
	\begin{bmatrix}
		\bar{e} & g\circ\Delta \\
		0 & f
	\end{bmatrix}
	\\
	=
	\begin{bmatrix}
		e\circ\bar{k}-\bar{g}\circ \bar{e} &
			(\id - e\circ h - n - \bar{g}\circ g)\circ\Delta
				+ (n\circ\Delta - m\circ f)
		\\
		0 & f - f
	\end{bmatrix} \, .
\end{multline}
They show that the diagrams in the Lemma show actual cochain maps. It is
also straight forward to see that they are mutually inverse up to
homotopy.
\end{proof}

We also have the analogous
\begin{lem} \label{lem:bot-triang-inv}
Consider equations $e[u] = 0$, $\bar{e}[\bar{u}] = 0$, $f[v] = 0$ and
\begin{equation}
	\begin{bmatrix}
		e & \Delta \\
		0 & f
	\end{bmatrix}
	\begin{bmatrix}
		u \\
		v
	\end{bmatrix}
	= 0 \, ,
\end{equation}
An equivalence up to homotopy $k,g$ between $e$ and $\bar{e}$
\begin{equation}
\begin{tikzcd}[column sep=large,row sep=large]
	\bullet \ar{d}{e} \ar[shift left]{r}{k} \&
	\bullet \ar[swap]{d}{\bar e} \ar[shift left]{l}{\bar k}
	\\
	\bullet \ar[shift left]{r}{g} \&
	\bullet \ar[shift left]{l}{\bar g}
\end{tikzcd} \, ,
	\qquad
	\begin{aligned}
		\bar{k} \circ k &= \id - h\circ e \, , \\
		\bar{g} \circ g &= \id - e\circ h \, .
	\end{aligned}
\end{equation}
induces the following mutually on-shell bijective morphisms of equations
in triangular form, where $\Delta = \bar{\Delta}\circ k$:
\begin{equation}
\begin{tikzcd}[column sep=large,row sep=large]
	\bullet
		\ar[swap]{d}{\begin{bmatrix}
			f & \Delta \\
			0 & e
		\end{bmatrix}}
		\ar{r}{\begin{bmatrix} \id & 0 \\ 0 & k \end{bmatrix}}
	\&
	\bullet
		\ar{d}{\begin{bmatrix}
			f & \bar{\Delta} \\
			0 & \bar{e}
		\end{bmatrix}}
	\\
	\bullet
		\ar[swap]{r}{\begin{bmatrix} \id & 0 \\ 0 & g \end{bmatrix}}
	\&
	\bullet
\end{tikzcd}
	\quad \iff \quad
\begin{tikzcd}[column sep=large,row sep=large]
	\bullet
		\ar[swap]{d}{\begin{bmatrix}
			f & \bar{\Delta} \\
			0 & \bar{e}
		\end{bmatrix}}
		\ar{r}{\begin{bmatrix} \id & 0 \\ 0 & \bar{k} \end{bmatrix}}
	\&
	\bullet
		\ar{d}{\begin{bmatrix}
			f & \Delta \\
			0 & e
		\end{bmatrix}}
	\\
	\bullet
		\ar[swap]{r}{\begin{bmatrix} \id & -\bar{\Delta}\circ h \\ 0 & \bar{g} \end{bmatrix}}
	\&
	\bullet
\end{tikzcd}
\end{equation}
\end{lem}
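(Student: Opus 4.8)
The plan is to argue exactly as in the proof of Lemma~\ref{lem:top-triang-inv}, of which this is the analogue for transforming the lower-right, rather than the upper-left, diagonal block of an upper-block-triangular equation. (One might hope to deduce the statement directly from Lemma~\ref{lem:top-triang-inv} by relabelling components, but conjugating by the permutation that swaps the two blocks turns the equation into a \emph{lower}-block-triangular one, outside the scope of that lemma; and anyway the hypotheses are not literally the same --- here one has a genuine equivalence up to homotopy but imposes nothing on $\Delta$ beyond $\Delta=\bar\Delta\circ k$. So I would redo the block-matrix bookkeeping directly.) There are two things to check: first, that the two displayed squares are honest cochain maps; second, that they are mutually inverse up to a \emph{block-triangular} homotopy, so that Lemma~\ref{lem:morphisms-maps}(b) upgrades them to mutually on-shell inverse on-shell bijective morphisms.

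For the first check I would multiply out the block matrices. Commutativity of the left-hand square reduces slot by slot to the tautologies $f=f$ and $0=0$, to $\bar e\circ k=g\circ e$ (the cochain-map property of $k,g$) in the lower-diagonal slot, and to $\bar\Delta\circ k=\Delta$ in the off-diagonal slot, which is exactly the defining relation $\Delta=\bar\Delta\circ k$. Commutativity of the right-hand square reduces in the same way to $e\circ\bar k=\bar g\circ\bar e$ (the cochain-map property of $\bar k,\bar g$) and, in the off-diagonal slot, to
\begin{equation}
	\Delta\circ\bar k=\bar\Delta\circ(k\circ\bar k)=\bar\Delta\circ(\id-h\circ\bar e)=\bar\Delta-\bar\Delta\circ h\circ\bar e ,
\end{equation}
where one substitutes $\Delta=\bar\Delta\circ k$ and the homotopy relation $k\circ\bar k=\id-h\circ\bar e$ of the equivalence $k,g$; this is precisely the off-diagonal entry written into the bottom map of the second diagram. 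So both squares are cochain maps.

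For the second check I would compose the two cochain maps in both orders. The morphism (upper) parts are block-diagonal and compose trivially,
\begin{equation}
	\begin{bmatrix}\id&0\\0&\bar k\end{bmatrix}\circ\begin{bmatrix}\id&0\\0&k\end{bmatrix}
	=\begin{bmatrix}\id&0\\0&\bar k\circ k\end{bmatrix}
	=\id-\begin{bmatrix}0&0\\0&h\end{bmatrix}\circ\begin{bmatrix}f&\Delta\\0&e\end{bmatrix} ,
\end{equation}
and symmetrically in the other order with $\bar h,\bar\Delta,\bar e$ in place of $h,\Delta,e$; so the relevant homotopies of the triangular equations are just the block-diagonal operators built from $h$ and $\bar h$. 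The cochain-level (lower) parts additionally carry the off-diagonal entry of the bottom maps, and composing these one gets the same diagonal blocks but a leftover off-diagonal correction of the shape $\bar\Delta\circ(\cdots)$. Using $\Delta=\bar\Delta\circ k$ together with the homotopy relations, I expect this correction can be rewritten so as to be absorbed into the off-diagonal block of the block-triangular homotopy without disturbing its diagonal blocks --- this is the manoeuvre performed in the proof of Lemma~\ref{lem:top-triang-inv}, where the analogous correction collapses once the homotopy identities are fed in. Once that is in place, Lemma~\ref{lem:morphisms-maps}(b) applies to each of the two upper morphisms and yields that they are on-shell bijective and mutually on-shell inverse. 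I expect the off-diagonal homotopy bookkeeping in this last step to be the main --- essentially the only --- obstacle; the rest is routine block-matrix algebra.
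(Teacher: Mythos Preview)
Your approach is essentially the same as the paper's: verify the two squares commute by computing the block-matrix differences and checking each entry vanishes by the hypotheses, then observe that the block-diagonal compositions of the upper morphisms differ from the identity by a block-triangular homotopy, which already gives the claimed mutual on-shell bijectivity. The paper likewise writes out the two difference identities and then simply asserts that mutual inverse up to homotopy is ``straight forward,'' so your level of detail matches (and in the morphism composition is slightly more explicit than) the original.

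One small slip: you invoke ``the homotopy relation $k\circ\bar k=\id-h\circ\bar e$,'' but the relation in that direction uses the \emph{other} homotopy $\bar h$, i.e.\ $k\circ\bar k=\id-\bar h\circ\bar e$ (the displayed relations $\bar k\circ k=\id-h\circ e$, $\bar g\circ g=\id-e\circ h$ involve $h$, while the reverse composites involve $\bar h$ by the definition of equivalence up to homotopy). This is harmless for the argument --- the off-diagonal entry of the second bottom map should correspondingly read $-\bar\Delta\circ\bar h$, and then your computation $\Delta\circ\bar k=\bar\Delta\circ(k\circ\bar k)=\bar\Delta-\bar\Delta\circ\bar h\circ\bar e$ goes through exactly as written.
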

\begin{proof}
Consider the following identities:
\begin{equation}
	\begin{bmatrix}
		f & \bar{\Delta} \\
		0 & \bar{e}
	\end{bmatrix}
	\circ
	\begin{bmatrix}
		\id & 0 \\
		0 & k
	\end{bmatrix}
	-
	\begin{bmatrix}
		\id & 0 \\
		0 & g
	\end{bmatrix}
	\circ
	\begin{bmatrix}
		f & \Delta \\
		0 & e
	\end{bmatrix}
	=
	\begin{bmatrix}
		f - f & \bar{\Delta}\circ k - \Delta \\
		0 & \bar{e}\circ k - g\circ e
	\end{bmatrix} \, ,
\end{equation}
\begin{equation}
	\begin{bmatrix}
		f & \Delta \\
		0 & e
	\end{bmatrix}
	\circ
	\begin{bmatrix}
		\id & 0 \\
		0 & \bar{k}
	\end{bmatrix}
	-
	\begin{bmatrix}
		\id & -\bar{\Delta}\circ h \\
		0 & \bar{g}
	\end{bmatrix}
	\circ
	\begin{bmatrix}
		f & \bar{\Delta} \\
		0 & \bar{e}
	\end{bmatrix}
	=
	\begin{bmatrix}
		f - f &
			\bar{\Delta}\circ (k \circ\bar{k} - \id + \bar{h}\circ \bar{e})
		\\
		0 & e\circ\bar{k}-\bar{g}\circ \bar{e}
	\end{bmatrix} \, .
\end{equation}
They show that the diagrams in the Lemma show actual cochain maps. It is
also straight forward to see that they are mutually inverse up to
homotopy.
\end{proof}

\subsection{Reducing the triangular form}\label{sec:triang-reduce}

Given an equation that is in block-\emph{triangular} form, when is it
equivalent to an equation in block-\emph{diagonal} form? That is, when
can an equation in triangular form be further reduced?

In the sequel, we will only need the following result.
\begin{lem}
Consider the mutually inverse up to homotopy cochain maps
\begin{equation}
\begin{tikzcd}[column sep=large,row sep=large]
	\bullet
		\ar[swap]{d}{\begin{bmatrix} e & f \\ 0 & \bar{e} \end{bmatrix}}
		\ar{r}{\begin{bmatrix} \id & \delta \\ 0 & \id \end{bmatrix}}
	\&
	\bullet
		\ar{d}{\begin{bmatrix} e & 0 \\ 0 & \bar{e} \end{bmatrix}}
	\\
	\bullet
		\ar[swap]{r}{\begin{bmatrix} \id & \eps \\ 0 & \id \end{bmatrix}}
	\&
	\bullet
\end{tikzcd}
	\quad \iff \quad
\begin{tikzcd}[column sep=large,row sep=large]
	\bullet
		\ar[swap]{d}{\begin{bmatrix} e & 0 \\ 0 & \bar{e} \end{bmatrix}}
		\ar{r}{\begin{bmatrix} \id & -\delta \\ 0 & \id \end{bmatrix}}
	\&
	\bullet
		\ar{d}{\begin{bmatrix} e & f \\ 0 & \bar{e} \end{bmatrix}}
	\\
	\bullet
		\ar[swap]{r}{\begin{bmatrix} \id & -\eps \\ 0 & \id \end{bmatrix}}
	\&
	\bullet
\end{tikzcd} \, .
\end{equation}
The two squares in these diagrams actually commute (that is, we can kill
the off-diagonal block $f$) if and only if the operators $\delta$ and
$\eps$ satisfy the identity
\begin{equation}
	e \circ \delta = f + \eps \circ \bar{e} \, .
\end{equation}
\end{lem}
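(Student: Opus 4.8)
The plan is to reduce everything to a componentwise comparison of two block-matrix composites, since a square of (matrices of) differential operators commutes precisely when the two composite matrices agree entry by entry. For the left-hand diagram I would simply compute the two ways around the square,
\begin{equation*}
\begin{bmatrix} e & 0 \\ 0 & \bar e \end{bmatrix}\circ\begin{bmatrix} \id & \delta \\ 0 & \id \end{bmatrix} = \begin{bmatrix} e & e\circ\delta \\ 0 & \bar e \end{bmatrix}, \qquad \begin{bmatrix} \id & \eps \\ 0 & \id \end{bmatrix}\circ\begin{bmatrix} e & f \\ 0 & \bar e \end{bmatrix} = \begin{bmatrix} e & f + \eps\circ\bar e \\ 0 & \bar e \end{bmatrix} .
\end{equation*}
The diagonal blocks ($e$ in the upper left, $\bar e$ in the lower right) and the zero in the lower left match identically, so the square commutes if and only if the two upper-right blocks coincide, i.e.\ $e\circ\delta = f + \eps\circ\bar e$. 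Because equality of block matrices is equivalent to equality of each block, this single computation already yields both directions of the ``if and only if'' for the first diagram.

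Next I would carry out the same bookkeeping for the second diagram, whose horizontal arrows carry the opposite sign and whose vertical arrows are interchanged:
\begin{equation*}
\begin{bmatrix} e & f \\ 0 & \bar e \end{bmatrix}\circ\begin{bmatrix} \id & -\delta \\ 0 & \id \end{bmatrix} = \begin{bmatrix} e & f - e\circ\delta \\ 0 & \bar e \end{bmatrix}, \qquad \begin{bmatrix} \id & -\eps \\ 0 & \id \end{bmatrix}\circ\begin{bmatrix} e & 0 \\ 0 & \bar e \end{bmatrix} = \begin{bmatrix} e & -\eps\circ\bar e \\ 0 & \bar e \end{bmatrix} .
\end{equation*}
Here the square commutes if and only if $f - e\circ\delta = -\eps\circ\bar e$, which is literally the same identity $e\circ\delta = f + \eps\circ\bar e$. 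So the two squares commute simultaneously, exactly under the asserted condition. I would then add the short remark that the two horizontal maps in each diagram are honest two-sided inverses of one another on the nose, since $\left[\begin{smallmatrix}\id&\delta\\0&\id\end{smallmatrix}\right]\circ\left[\begin{smallmatrix}\id&-\delta\\0&\id\end{smallmatrix}\right]=\id$ and likewise with $\eps$; hence, once the squares commute, these are genuine cochain maps which are genuine inverses (the homotopies appearing in ``mutually inverse up to homotopy'' may be taken to be zero), and conversely the up-to-homotopy data built from $\delta,\eps$ is always present, so the only real content is the commutation of the square.

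There is essentially no obstacle: the whole thing is a two-by-two block multiplication. The only points requiring care are tracking the sign in the second diagram so that it reproduces the very same identity rather than a sign-variant, and perhaps spelling out why commutativity is the same as \emph{killing the block $f$}: conjugating the diagonal equation by the triangular change of variables $\left[\begin{smallmatrix}\id&\pm\delta\\0&\id\end{smallmatrix}\right]$ produces precisely the triangular equation whose off-diagonal block is $e\circ\delta - \eps\circ\bar e$, and this equals $f$ exactly when the stated identity holds.
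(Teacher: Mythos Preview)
Your proof is correct and follows essentially the same approach as the paper: the paper simply writes the difference of the two composites as a single block matrix whose only nontrivial entry is $e\circ\delta - f - \eps\circ\bar{e}$, and notes that the second diagram is completely analogous. Your version is slightly more detailed (you compute both composites separately and carry out the second diagram explicitly), but the content is identical.
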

Essentially, the operators $\delta$ and $\eps$ allow us to solve the
equation is $e[u] = f[\bar{u}]$ by $u = \delta[\bar{u}]$ whenever
$\bar{e}[\bar{u}] = 0$.
\begin{proof}
The desired result follows immediately from the identity
\begin{equation}
	\begin{bmatrix} e & 0 \\ 0 & \bar{e} \end{bmatrix}
	\circ
	\begin{bmatrix} \id & \delta \\ 0 & \id \end{bmatrix}
	-
	\begin{bmatrix} \id & \eps \\ 0 & \id \end{bmatrix}
	\circ
	\begin{bmatrix} e & f \\ 0 & \bar{e} \end{bmatrix}
	=
	\begin{bmatrix}
		e - e & e\circ\delta - f - \eps\circ\bar{e} \\
		0 & \bar{e} - \bar{e}
	\end{bmatrix} \, .
\end{equation}
The reverse direction is completely analogous.
\end{proof}

\subsection{Formal adjoints} \label{sec:adjoints}

Recall that we are only dealing with linear ordinary differential
operators, with derivative operator $\del_r$, where each operator can be
written as a matrix of scalar differential operators with smooth
coefficients. Given a matrix linear differential operator $k$, we define
its formal adjoint $k^*$ using the following rules. If $k$ is a zero-th
order scalar operator acting on a single scalar argument, basically
multiplication by some scalar function $k(r)$, then $k^*$ is
multiplication by the complex conjugate function $k(r)^*$.
If $k^* = \del_r$ acting on a single scalar argument, then $k^* =
-\del_r$. If $k$ and $g$ are scalar operators acting on a single scalar
argument, then $(k+g)^* = k^*+g^*$ and $(k\circ g)^* = g^*\circ k^*$.
Finally, if $k = [k_{ij}]$ is a matrix of scalar differential operators,
then $(k^*)_{ij} = k_{ji}^*$. It is straight forward to check that this
gives rise to a consistent and unique definition, which satisfies
$(k\circ g)^* = g^*\circ k^*$ and $k^** = k$, now with any composable
operators $k$ and $g$, and agrees with the common notion of formal
adjoint. When dealing with a family of operators $k_\omega$ parametrized
by a possibly complex parameter $\omega$, we adopt the convension that
the family of adjoint operators is parametrized as $(k^*)_\omega =
(k_{\omega^*})^*$. With this convention, $(k^*)_\omega = (k_\omega)^*$
for real values $\omega = \omega^*$ of the parameter.
An operator satisfying $k^* = k$ is called \emph{(formally) self-adjoint}.

Sometimes, it will be convenient for us, when introducing a new
differential operator, to write it directly as $k^*$, even if we have
not explicitly introduced the operator $k$. This is possible, since any
differential operator is the formal adjoint of some differential
operator. This convention allows us to highlight some special features
of the equations and differential operators that we will be working
with, and also reigns in a bit the proliferation of new notation.


\subsection{Gauge fixing, gauge-invariant fields and triangular decoupling}\label{sec:decoupling}

In this section, we give an abstract overview of the triangular
decoupling procedure that will be illustrated on a specific example in
Section~\ref{sec:vw}. The abstract approach allows us to separate the general
strategy from the calculational details of a specific example, so that
it could be applied in other examples as well. One benefit of having a
general strategy is that it eliminates much trial-and-error and
unnecessary calculations in more complex examples.

We start with a system of differential equations $E[u] = 0$ and aim to
find explicit equivalence of this system to one that is in block
upper-triangular form, with as many off-diagonal terms set to zero (or
\emph{reduced}) as possible. Such a \emph{triangular decoupling} will be
possible because of the large amount of structure that we can assume
about our equation. Essentially, we expect $E$ to be the gauge-fixed
form of a more fundamental self-adjoint equation $E_0[u] = 0$ that has
both gauge symmetries and (hence necessarily) Noether identities. The
gauge fixing and the modifications of $E$ from $E_0$ are chosen such
that $E$ remains self-adjoint and it induces a nice residual gauge
equation. We will refer to such a gauge fixing condition as
\emph{harmonic}, because harmonic or generalized harmonic gauges tend to
satisfy all of these requirements.

While we do try to keep track of self-adjointness of our equations,
which is helpful at least as an error-check on explicit calculations,
adjoint operators do not play a specific role in the decoupling strategy
outlined below. Hence, it would work just as well if the operators that
appear as adjoints are replaced by independent operators, which satisfy
all the same composition identities.

The general strategy will be to show that the original dependent
variables separate into \emph{gauge modes}, \emph{gauge invariant modes}
and \emph{constraint violating modes}, which are coupled to each other in
a hierarchical (hence \emph{triangular}) and minimal way.

Let us introduce the following differential operators and identities
among them:
\begin{itemize}
\item
	$E_0$---self-adjoint equation with gauge symmetry,
\item
	$D$---gauge symmetry generator, $E_0 \circ D = 0$,
\item
	$D^*$---Noether identity, $D^*\circ E_0 = 0$,
\item
	$T$---harmonic gauge fixing condition, $T\circ D = R\circ \D_D$,
\item
	$\D_D$---self-adjoint residual gauge equation,
\item
	$R$---self-adjoint gauge fixing correction, $R^* = R$,
\item
	$E=E_0+T^*\circ \bar{R} \circ T$---self-adjoint gauge-fixed equation,
	$\bar{R} = R^{-1}$,
\item
	$\Phi_0$---gauge invariant field, $\Phi_0\circ D = 0$,
\item
	$\D_\Phi$---self-adjoint dynamical equation for gauge-invariant
	fields, $\D_\Phi \circ \Phi_0 = \bar{\Phi}_0^* \circ E_0$, with
	$E\circ D = T^*\circ \D_D$ and $\D_D^* \circ T = D^*\circ E$.
\end{itemize}
The operators
\begin{equation}
	D[\phi_0] \, ,
	\quad
	\Phi_0 \, ,
	\quad
	T
\end{equation}
respectively separate out, in a sense that will become clear by the end
of this section, the \emph{gauge modes}, the \emph{gauge invariant
modes} and the \emph{constraint violating modes}. 
The general strategy will be to show that these modes couple
to each other only in a hierarchical (hence \emph{triangular}) way, and
to find slight modifications to them that reduce this coupling as much
as possible. The subscript $s$ on $\phi_s$ and $\psi_s$ indicates the
spin of the corresponding Regge-Wheeler operators $\D_s$ that appear on
the diagonal of the decoupled form.

The above operators fit into the following commutative diagrams:
\begin{equation}
\begin{tikzcd}[column sep=large,row sep=large]
	\bullet \ar[swap]{d}{\D_D} \ar{r}{D} \&
	\bullet \ar{d}{E}
	\\
	\bullet \ar[swap]{r}{T^*} \&
	\bullet
\end{tikzcd} \, ,
	\quad
\begin{tikzcd}[column sep=large,row sep=large]
	\bullet \ar[swap]{d}{E} \ar{r}{T} \&
	\bullet \ar{d}{\D_D^*}
	\\
	\bullet \ar[swap]{r}{D^*} \&
	\bullet
\end{tikzcd} \, ,
	\quad
\begin{tikzcd}[column sep=large,row sep=large]
	\bullet \ar[swap]{d}{E_0} \ar{r}{\Phi_0} \&
	\bullet \ar{d}{\D_\Phi}
	\\
	\bullet \ar[swap]{r}{\bar{\Phi}_0^*} \&
	\bullet
\end{tikzcd} \, .
\end{equation}
Even though we have $\D_D^* = \D_D$, we use this notation to formally
distinguish between the dynamical equation satisfied by the
longitudinal/constraint-violating modes $T$ and the residual gauge modes
$D$. Note that the operator $\bar{\Phi}_0^*$ is not uniquely fixed by
completing the commutative diagram between $E_0$, $\Phi_0$ and $\D_\Phi$.
If $\bar{\Phi}_1^* = \bar{\Phi}_0^* + N$ is any other such operator, then
$N \circ E_0 = 0$ and hence $N$ is a Noether identity for
$E_0$.

As a first step, we use the operators $T$ and $D^*$ and
Lemma~\ref{lem:induce-triang} to decouple $\D_D^*$ from $E$. We have the
following morphism from $E$ to a triangular form:
\begin{equation} \label{eq:E-T-triang}
\begin{tikzcd}[column sep=huge,row sep=huge]
	\bullet
		\ar[swap]{d}{E}
		\ar{r}{\begin{bmatrix} \id \\ T \end{bmatrix}}
	\&
	\bullet
		\ar{d}{\begin{bmatrix}
			E & 0 \\
			T & -\id \\
			0 & \D_D^*
		\end{bmatrix}}
	\\
	\bullet
		\ar[swap]{r}{\begin{bmatrix} \id \\ 0 \\ D^* \end{bmatrix}}
	\&
	\bullet
\end{tikzcd}
	\quad \iff \quad
\begin{tikzcd}[column sep=huge,row sep=huge]
	\bullet
		\ar[swap]{d}{\begin{bmatrix}
			E & 0 \\
			T & -\id \\
			0 & \D_D^*
		\end{bmatrix}}
		\ar{r}{\begin{bmatrix} \id & 0 \end{bmatrix}}
	\&
	\bullet
		\ar{d}{E}
	\\
	\bullet
		\ar[swap]{r}{\begin{bmatrix} \id & 0 & 0 \end{bmatrix}}
	\&
	\bullet
\end{tikzcd} \, ,
\end{equation}
where the last two morphisms are on-shell mutually inverse.

Next, we would like to use the operators $\Phi_0$ and $\bar{\Phi}_0^*$ to
decouple $\D_\Phi$ from $[\begin{smallmatrix} E \\ T
\end{smallmatrix}]$. Let us introduce the operator
\begin{equation}
	\bar{\Delta}_\Phi = \bar{\Phi}_0^*\circ T^* \circ \bar{R} \, ,
\end{equation}
which is necessary to decouple $\D_\Phi$, since it completes the
commutative diagram
\begin{equation}
\begin{tikzcd}[column sep=large,row sep=large]
	\bullet \ar[swap]{d}{\begin{bmatrix} E \\ T \end{bmatrix}} \ar{r}{\Phi_0} \&
	\bullet \ar{d}{\D_\Phi}
	\\
	\bullet \ar[swap]{r}{\begin{bmatrix} \bar{\Phi}_0^* & -\bar{\Delta}_\Phi \end{bmatrix}} \&
	\bullet
\end{tikzcd} \, .
\end{equation}
Once again, applying Lemma~\ref{lem:induce-triang} we get the mutually
on-shell inverse morphisms
\begin{equation} \label{eq:E-T-Phi0-triang}
\begin{tikzcd}[column sep=huge,row sep=huge]
	\bullet
		\ar[swap]{d}{\begin{bmatrix} E \\ T \end{bmatrix}}
		\ar{r}{\begin{bmatrix} \id \\ \Phi_0 \end{bmatrix}}
	\&
	\bullet
		\ar{d}{\begin{bmatrix}
			E & 0 \\
			T & 0 \\
			\Phi_0 & -\id \\
			0 & \D_\Phi
		\end{bmatrix}}
	\\
	\bullet
		\ar[swap]{r}{\begin{bmatrix}
			\id & 0 \\
			0 & \id \\
			0 & 0 \\
			\bar{\Phi}_0^* & -\bar{\Delta}_\Phi
		\end{bmatrix}}
	\&
	\bullet
\end{tikzcd}
	\iff 
\begin{tikzcd}[column sep=huge,row sep=huge]
	\bullet
		\ar[swap]{d}{\begin{bmatrix}
			E & 0 \\
			T & 0 \\
			\Phi_0 & -\id \\
			0 & \D_\Phi
		\end{bmatrix}}
		\ar{r}{\begin{bmatrix} \id & 0 \end{bmatrix}}
	\&
	\bullet
		\ar{d}{\begin{bmatrix} E \\ T \end{bmatrix}}
	\\
	\bullet
		\ar[swap]{r}{\begin{bmatrix}
			\id & 0 & 0 & 0 \\
			0 & \id & 0 & 0
		\end{bmatrix}}
	\&
	\bullet
\end{tikzcd} \, .
\end{equation}
The operator $\bar{\Delta}_\Phi$ is an obstacle to decoupling $\D_\Phi$
directly from $E$. This obstacle becomes trivial when $\bar{\Delta}_\Phi =
\D_\Phi\circ q$ for some operator $q$, since then replacing $\Phi_0$
with $\Phi_0 + q^*\circ T$ and setting $\bar{\Delta}_\Phi = 0$ does the
required job. If we had used $\Phi_0 + q\circ T$ from the start, then
the final triangular form that we will arrive at will not have any
coupling between $T$ and $\Phi_0 + q\circ T$. We will actually be able
to achieve such a reduced triangular form, by using the strategy from
Section~\ref{sec:triang-reduce} at the very end.

Note that due to the identities involving $D$, $T$, $\Phi_0$ and $E$,
the operator $D$ remains a morphism from $\D_D$ to each of the $E$,
$E$-$T$ and $E$-$T$-$\Phi_0$ equations. We will proceed under the
hypothesis that in that last case \emph{the morphism $D$ is actually
on-shell bijective}. That is, by an application of
Lemmas~\ref{lem:morphisms-maps} and~\ref{lem:morphism-induces-map} we
have the following mutually on-shell inverse morphisms
\begin{equation} \label{eq:E-T-Phi0-inv}
\begin{tikzcd}[column sep=huge,row sep=huge]
	\bullet
		\ar[swap]{d}{\D_D}
		\ar{r}{D}
	\&
	\bullet
		\ar{d}{\begin{bmatrix} E \\ T \\ \Phi_0 \end{bmatrix}}
	\\
	\bullet
		\ar[swap]{r}{\begin{bmatrix} T^* \\ R \\ 0 \end{bmatrix}}
	\&
	\bullet
\end{tikzcd}
	\quad \iff \quad
\begin{tikzcd}[column sep=huge,row sep=huge]
	\bullet
		\ar[swap]{d}{\begin{bmatrix} E \\ T \\ \Phi_0 \end{bmatrix}}
		\ar{r}{\bar{D}_0}
	\&
	\bullet
		\ar{d}{\D_D}
	\\
	\bullet
		\ar[swap]{r}{\begin{bmatrix} \bar{T}_0^* & -\Delta_T & -\Delta_\Phi \end{bmatrix}}
	\&
	\bullet
\end{tikzcd} \, ,
\end{equation}
for some operators $\bar{T}_0^*$, $\Delta_T$ and $\Delta_\Phi$. It would
be convenient if the above diagrams actually described an equivalence up
to homotopy. In general that will not be true, but only due to the
rather restricted notion of \emph{up to homotopy} that we have adopted
in this work. On the other hand, we make another hypothesis that the inverse
relationship between the above cochain maps takes on the following
slightly more general form, which we have actually already seen in the
hypotheses of Lemma~\ref{lem:top-triang-inv}:
\begin{align}
	\bar{D}_0 \circ D
	&= \id - h_D \circ \D_D \, ,
	\\
	\begin{bmatrix}
		\bar{T}_0^* & -\Delta_T & -\Delta_\Phi
	\end{bmatrix}
	\circ
	\begin{bmatrix}
		T^* \\ R \\ 0
	\end{bmatrix}
	&= \id - \D_D \circ h_D \, ,
	\\
	D \circ \bar{D}_0
	&= \id -
		\begin{bmatrix} \bar{h}_E & \bar{T}_2 & \bar{\Phi}_2 \end{bmatrix}
		\circ
		\begin{bmatrix} E \\ T \\ \Phi_0 \end{bmatrix}
	\, , \\
	\begin{bmatrix}
		T^* \\ R \\ 0
	\end{bmatrix}
	\circ
	\begin{bmatrix}
		\bar{T}_0^* & -\Delta_T & -\Delta_\Phi
	\end{bmatrix}
	&= \begin{bmatrix}
		\id & 0 & 0 \\
		0 & \id & 0 \\
		0 & 0 & \id
	\end{bmatrix}
	-
	\begin{bmatrix}
		E \\ T \\ \Phi_0
	\end{bmatrix}
	\circ
	\begin{bmatrix} \bar{h}_E & \bar{T}_2 & \bar{\Phi}_2 \end{bmatrix}
	-
	\bar{n} \, ,
\end{align}
\begin{equation}
	\text{where} \quad
	\bar{n}
	\circ
	\begin{bmatrix}
		0 & 0 \\
		0 & -\id \\
		-\id & 0
	\end{bmatrix}
	=
	\begin{bmatrix}
		\Phi_2^*    & \bar{D}_2^*   \\
    \bar{m}_{T,\Phi}    & \bar{m}_{T,T}   \\
    \bar{m}_{\Phi,\Phi} & \bar{m}_{\Phi,T}
	\end{bmatrix}
	\begin{bmatrix}
		\D_\Phi & \bar{\Delta}_\Phi \\
		0 & \D_D^*
	\end{bmatrix}
	\, ,
\end{equation}
for some choice of the newly introduced operators that satisfy the above
identities. Note that, as in the hypotheses of
Lemma~\ref{lem:top-triang-inv}, we require the operator $\bar{n}$ to be
a Noether identity for the $E$-$T$-$\Phi_0$ system that also satisfies
the factorization identity given by the last equation. 

Next applying Lemma~\ref{lem:top-triang-inv} we lift the above
equivalence first from the $E$-$T$-$\Phi_0$ system to the
block-triangular system~\eqref{eq:E-T-triang},
\begin{equation}
\begin{tikzcd}[column sep=huge,row sep=huge]
	\bullet
		\ar[swap]{d}{\begin{bmatrix}
			\D_D & \Delta_\Phi \\
			0 & \D_\Phi
		\end{bmatrix}}
		\ar{r}{\begin{bmatrix}
			D & \bar{\Phi}_2 \\
			0 & \id
		\end{bmatrix}}
	\&
	\bullet
		\ar{d}{\begin{bmatrix}
			E & 0 \\
			T & 0 \\
			\Phi_0 & -\id \\
			0 & \D_\Phi
		\end{bmatrix}}
	\\
	\bullet
		\ar[swap]{r}{\begin{bmatrix}
			T^* & \Phi_2^* \\
			R & \bar{m}_{T,\Phi} \\
			0 & \bar{m}_{\Phi,\Phi} \\
			0 & \id
		\end{bmatrix}}
	\&
	\bullet
\end{tikzcd}
	\iff
\begin{tikzcd}[column sep=huge,row sep=huge]
	\bullet
		\ar[swap]{d}{\begin{bmatrix}
			E & 0 \\
			T & 0 \\
			\Phi_0 & -\id \\
			0 & \D_\Phi
		\end{bmatrix}}
		\ar{r}{\begin{bmatrix}
			\bar{D}_0 & 0\\
			0 & \id
		\end{bmatrix}}
	\&
	\bullet
		\ar{d}{\begin{bmatrix}
			\D_D & \Delta_\Phi \\
			0 & \D_\Phi
		\end{bmatrix}}
	\\
	\bullet
		\ar[swap]{r}{\begin{bmatrix}
			\bar{T}_0^* & -\Delta_T & -\Delta_\Phi & 0 \\
			0 & 0 & 0 & \id
		\end{bmatrix}}
	\&
	\bullet
\end{tikzcd} \, ,
\end{equation}
and then to the block-triangular system~\eqref{eq:E-T-Phi0-triang},
\begin{equation}
\begin{tikzcd}[column sep=huge,row sep=huge]
	\bullet
		\ar[swap]{d}{\left[\begin{smallmatrix}
			\D_D & \Delta_\Phi & \Delta_T \\
			0 & \D_\Phi & \bar{\Delta}_\Phi \\
			0 & 0 & \D_D^*
		\end{smallmatrix}\right]}
		\ar{r}{\left[\begin{smallmatrix}
			D & \bar{\Phi}_2 & \bar{T}_2 \\
			0 & \id & 0 \\
			0 & 0 & \id
		\end{smallmatrix}\right]}
	\&
	\bullet
		\ar{d}{\left[\begin{smallmatrix}
			E & 0 & 0\\
			T & 0 & -\id \\
			\Phi_0 & -\id & 0 \\
			0 & \D_\Phi & \bar{\Delta}_\Phi \\
			0 & 0 & \D_D^*
		\end{smallmatrix}\right]}
	\\
	\bullet
		\ar[swap]{r}{\left[\begin{smallmatrix}
			T^* & \Phi_2^* & \bar{D}_2^* \\
			R & \bar{m}_{T,\Phi} & \bar{m}_{T,T} \\
			0 & \bar{m}_{\Phi,\Phi} & \bar{m}_{\Phi,T} \\
			0 & \id & 0 \\
			0 & 0 & \id
		\end{smallmatrix}\right]}
	\&
	\bullet
\end{tikzcd}
	\iff
\begin{tikzcd}[column sep=huge,row sep=huge]
	\bullet
		\ar[swap]{d}{\left[\begin{smallmatrix}
			E & 0 & 0\\
			T & 0 & -\id \\
			\Phi_0 & -\id & 0 \\
			0 & \D_\Phi & \bar{\Delta}_\Phi \\
			0 & 0 & \D_D^*
		\end{smallmatrix}\right]}
		\ar{r}{\left[\begin{smallmatrix}
			\bar{D}_0 & 0 & 0 \\
			0 & \id & 0 \\
			0 & 0 & \id
		\end{smallmatrix}\right]}
	\&
	\bullet
		\ar{d}{\left[\begin{smallmatrix}
			\D_D & \Delta_\Phi & \Delta_T \\
			0 & \D_\Phi & \bar{\Delta}_\Phi \\
			0 & 0 & \D_D^*
		\end{smallmatrix}\right]}
	\\
	\bullet
		\ar[swap]{r}{\left[\begin{smallmatrix}
			\bar{T}_0^* & -\Delta_T & -\Delta_\Phi & 0 & 0 \\
			0 & 0 & 0 & \id & 0 \\
			0 & 0 & 0 & 0 & \id
		\end{smallmatrix}\right]}
	\&
	\bullet
\end{tikzcd} \, .
\end{equation}

Next, to arrive at an equivalence up to homotopy with the original $E$
equation, we repeatedly apply Lemma~\ref{lem:top-triang-inv} and compose
the appropriate morphisms from~\eqref{eq:E-T-Phi0-triang}
and~\eqref{eq:E-T-triang} to first arrive at
\begin{equation}
\begin{tikzcd}[column sep=huge,row sep=huge]
	\bullet
		\ar[swap]{d}{\left[\begin{smallmatrix}
			\D_D & \Delta_\Phi & \Delta_T \\
			0 & \D_\Phi & \bar{\Delta}_\Phi \\
			0 & 0 & \D_D^*
		\end{smallmatrix}\right]}
		\ar{r}{\left[\begin{smallmatrix}
			D & \bar{\Phi}_2 & \bar{T}_2 \\
			0 & 0 & \id
		\end{smallmatrix}\right]}
	\&
	\bullet
		\ar{d}{\left[\begin{smallmatrix}
			E & 0\\
			T & -\id \\
			0 & \D_D^*
		\end{smallmatrix}\right]}
	\\
	\bullet
		\ar[swap]{r}{\left[\begin{smallmatrix}
			T^* & \Phi_2^* & \bar{D}_2^* \\
			R & \bar{m}_{T,\Phi} & \bar{m}_{T,T} \\
			0 & 0 & \id
		\end{smallmatrix}\right]}
	\&
	\bullet
\end{tikzcd}
	\iff
\begin{tikzcd}[column sep=huge,row sep=huge]
	\bullet
		\ar[swap]{d}{\left[\begin{smallmatrix}
			E & 0\\
			T & -\id \\
			0 & \D_D^*
		\end{smallmatrix}\right]}
		\ar{r}{\left[\begin{smallmatrix}
			\bar{D}_0 & 0 \\
			\Phi_0 & 0 \\
			0 & \id
		\end{smallmatrix}\right]}
	\&
	\bullet
		\ar{d}{\left[\begin{smallmatrix}
			\D_D & \Delta_\Phi & \Delta_T \\
			0 & \D_\Phi & \bar{\Delta}_\Phi \\
			0 & 0 & \D_D^*
		\end{smallmatrix}\right]}
	\\
	\bullet
		\ar[swap]{r}{\left[\begin{smallmatrix}
			\bar{T}_0^* & -\Delta_T & 0 \\
			\bar{\Phi}_0^* & -\bar{\Delta}_\Phi & 0 \\
			0 & 0 & \id
		\end{smallmatrix}\right]}
	\&
	\bullet
\end{tikzcd} \, ,
\end{equation}
and then finally arrive at
\begin{equation} \label{eq:triang-form}
\begin{tikzcd}[column sep=huge,row sep=huge]
	\bullet
		\ar[swap]{d}{E}
		\ar{r}{\begin{bmatrix}
			\bar{D}_0 \\
			\Phi_0 \\
			T
		\end{bmatrix}}
	\&
	\bullet
		\ar{d}{\begin{bmatrix}
			\D_D & \Delta_\Phi  & \Delta_T \\
			0 & \D_\Phi & \bar{\Delta}_\Phi  \\
			0 & 0 & \D_D^*
		\end{bmatrix}}
	\\
	\bullet
		\ar[swap]{r}{\begin{bmatrix}
			\bar{T}_0^* \\
			\bar{\Phi}_0^* \\
			D^*
		\end{bmatrix}}
	\&
	\bullet
\end{tikzcd}
	\quad \iff \quad
\begin{tikzcd}[column sep=huge,row sep=huge]
	\bullet
		\ar[swap]{d}{\begin{bmatrix}
			\D_D & \Delta_\Phi & \Delta_T \\
			0 & \D_\Phi & \bar{\Delta}_{\Phi} \\
			0 & 0 & \D_D^*
		\end{bmatrix}}
		\ar{r}{\begin{bmatrix}
			D & \bar{\Phi}_2 & \bar{T}_2
		\end{bmatrix}}
	\&
	\bullet
		\ar{d}{E}
	\\
	\bullet
		\ar[swap]{r}{\begin{bmatrix}
			T^* &
			\Phi_2^* &
			\bar{D}_2^*
		\end{bmatrix}}
	\&
	\bullet
\end{tikzcd} \, .
\end{equation}
By construction the above morphisms are mutually on-shell inverse and in
fact are an equivalence up to homotopy. The last claim would follow from
Lemma~\ref{lem:morphisms-maps}(d) when neither $E$ nor the triangular
decoupled system is over-determined (Definition~\ref{def:determined}).
In fact, in cases that are of interest to us, both systems will actually
be determined.

Next, using the idea from Section~\ref{sec:triang-reduce}, we can
reduce the triangular form~\eqref{eq:triang-form} further provided we
can find $\delta$ and $\eps$ operators satisfying the identities
\begin{align}
	\D_D \circ \delta_\Phi
	&= \Delta_\Phi + \eps_\Phi\circ \D_\Phi \, , \\
	\D_\Phi \circ \bar{\delta}_\Phi
	&= \bar{\Delta}_\Phi + \bar{\eps}_\Phi \circ \D_D^* \, , \\
	\D_D \circ \delta_T
	&= (\bar{\Delta}_T - \Delta) + \eps_T \circ \D_D^* \, ,
	\quad \text{where} \quad
	\bar{\Delta}_T = \Delta_T + \eps_\Phi\circ \bar{\Delta}_\Phi \, ,
\end{align}
for some operator $\Delta$. The result is the following simplified
triangular form:
\begin{multline} \label{eq:final-triang}
\begin{tikzcd}[column sep=huge,row sep=huge]
	\bullet
		\ar[swap]{d}{\begin{bmatrix}
			\D_D & \Delta_\Phi & \Delta_T \\
			0 & \D_\Phi & \bar{\Delta}_\Phi \\
			0 & 0 & \D_D^*
		\end{bmatrix}}
		\ar{r}{\begin{bmatrix}
			\id & \delta_\Phi & \delta_T \\
			0 & \id & \bar{\delta}_\Phi \\
			0 & 0 & \id
		\end{bmatrix}}
	\&
	\bullet
		\ar{d}{\begin{bmatrix}
			\D_D & 0 & \Delta \\
			0 & \D_\Phi & 0 \\
			0 & 0 & \D_D^*
		\end{bmatrix}}
	\\
	\bullet
		\ar[swap]{r}{\begin{bmatrix}
			\id & \eps_\Phi & \eps_T \\
			0 & \id & \bar{\eps}_\Phi \\
			0 & 0 & \id
		\end{bmatrix}}
	\&
	\bullet
\end{tikzcd}
	\\
	\iff
\begin{tikzcd}[column sep=huge,row sep=huge]
	\bullet
		\ar[swap]{d}{\begin{bmatrix}
			\D_D & 0 & \Delta \\
			0 & \D_\Phi & 0 \\
			0 & 0 & \D_D^*
		\end{bmatrix}}
		\ar{r}{\begin{bmatrix}
			\id & -\delta_\Phi & -\delta_T + \delta_\Phi\circ \bar{\delta}_\Phi \\
			0 & \id & -\bar{\delta}_\Phi \\
			0 & 0 & \id
		\end{bmatrix}}
	\&
	\bullet
		\ar{d}{\begin{bmatrix}
			\D_D & \Delta_\Phi & \Delta_T \\
			0 & \D_\Phi & \bar{\Delta}_\Phi \\
			0 & 0 & \D_D^*
		\end{bmatrix}}
	\\
	\bullet
		\ar[swap]{r}{\begin{bmatrix}
			\id & -\eps_\Phi & -\eps_T + \eps_\Phi\circ \bar{\eps}_\Phi \\
			0 & \id & -\bar{\eps}_\Phi \\
			0 & 0 & \id
		\end{bmatrix}}
	\&
	\bullet
\end{tikzcd}
\end{multline}

Thus, we arrive at the final form of the equivalence up to homotopy of
our original $E$ equation with the simplified triangular form
\begin{equation}
\begin{tikzcd}[column sep=huge,row sep=huge]
	\bullet
		\ar[swap]{d}{E}
		\ar{r}{\begin{bmatrix}
			\bar{D} \\
			\Phi \\
			T
		\end{bmatrix}}
	\&
	\bullet
		\ar{d}{\begin{bmatrix}
			\D_D & 0 & \Delta \\
			0 & \D_\Phi & 0 \\
			0 & 0 & \D_D^*
		\end{bmatrix}}
	\\
	\bullet
		\ar[swap]{r}{\begin{bmatrix}
			\bar{T}_1^* \\
			\bar{\Phi}_1^* \\
			D^*
		\end{bmatrix}}
	\&
	\bullet
\end{tikzcd}
	\quad \iff \quad
\begin{tikzcd}[column sep=huge,row sep=huge]
	\bullet
		\ar[swap]{d}{\begin{bmatrix}
			\D_D & 0 & \Delta \\
			0 & \D_\Phi & 0 \\
			0 & 0 & \D_D^*
		\end{bmatrix}}
		\ar{r}{\begin{bmatrix}
			D &
			\bar{\Phi} &
			\bar{T}
		\end{bmatrix}}
	\&
	\bullet
		\ar{d}{E}
	\\
	\bullet
		\ar[swap]{r}{\begin{bmatrix}
			T^* &
			\Phi_1^* &
			\bar{D}_1^*
		\end{bmatrix}}
	\&
	\bullet
\end{tikzcd} \, ,
\end{equation}
where
\begin{align}
	\begin{bmatrix}
		\bar{D} \\ \Phi \\ T
	\end{bmatrix}
	&= \begin{bmatrix}
		\id & \delta_\Phi & \delta_T \\
		0 & \id & \bar{\delta}_\Phi \\
		0 & 0 & \id
	\end{bmatrix}
	\circ
	\begin{bmatrix}
		\bar{D}_0 \\ \Phi_0 \\ T
	\end{bmatrix} \, ,
	\\
	\begin{bmatrix}
		\bar{T}_1 \\ \bar{\Phi}_1^* \\ D^*
	\end{bmatrix}
	&= \begin{bmatrix}
		\id & \eps_\Phi & \eps_T \\
		0 & \id & \bar{\eps}_\Phi \\
		0 & 0 & \id
	\end{bmatrix}
	\circ
	\begin{bmatrix}
		\bar{T}_0^* \\ \bar{\Phi}_0^* \\ D^*
	\end{bmatrix} \, ,
	\\
	\begin{bmatrix}
		D & \bar{\Phi} & \bar{T}
	\end{bmatrix}
	&= \begin{bmatrix}
		D & \bar{\Phi}_2 & \bar{T}_2
	\end{bmatrix}
	\circ
	\begin{bmatrix}
		\id & -\delta_\Phi & -\delta_T + \delta_\Phi\circ \bar{\delta}_\Phi \\
		0 & \id & -\bar{\delta}_\Phi \\
		0 & 0 & \id
	\end{bmatrix} \, ,
	\\
	\begin{bmatrix}
		T^* & \Phi_1^* & \bar{D}_1^*
	\end{bmatrix}
	&= \begin{bmatrix}
		T^* & \Phi_2^* & \bar{D}_2^*
	\end{bmatrix}
	\circ
	\begin{bmatrix}
		\id & -\eps_\Phi & -\eps_T + \eps_\Phi\circ \bar{\eps}_\Phi \\
		0 & \id & -\bar{\eps}_\Phi \\
		0 & 0 & \id
	\end{bmatrix} \, ,
\end{align}

In the next section, we apply the above strategy to an explicit example.

\section{Vector wave equation on Schwarzschild}\label{sec:vw}

Consider the exterior Schwarzschild spacetime $(\mathcal{M},\gf)$ of mass
$M>0$, where $\mathcal{M} \cong \mathbb{R}^2\times S_2$. If
$(S^2,\Omega)$ is the unit round sphere and $(t,r)$, with $-\oo<t<\oo$
and $2M<r<\oo$, are coordinates on the $\mathbb{R}^2$ factor, then the
Schwarzschild metric is
\begin{equation}
	\gf := g + r^2 \Omega \, ,
	\quad
	g := -f(r) \, {\d t}^2 + \frac{{\d r}^2}{f(r)} \, ,
	\quad
	f := 1 - \frac{2M}{r} \, .
\end{equation}
For convenience we also define
\begin{equation}
	f_1 := r f' = \frac{2M}{r} \, .
\end{equation}
Following the $2+2$ formalism of~\cite{martel-poisson}, spacetime tensors
decompose into spherical and radio-temporal sectors according to
\begin{equation}
	v_\mu \to \begin{pmatrix}
		v_a \\
		r u_A
	\end{pmatrix} \, ,
	\quad
	v^\nu \mapsto \begin{pmatrix}
		v^b \\
		\frac{1}{r} u^B
	\end{pmatrix} \, ,
\end{equation}
which is consistent both with raising-lowering $v_\mu \leftrightarrow
v^\mu$ with $\gf_{\mu\nu}$, as well as raising-lowering $v_a
\leftrightarrow v^a$, $u_A \leftrightarrow u^A$ with $g_{ab}$,
$\Omega_{AB}$, respectively.
If $D_A$, $\epsilon_{AB}$ and $Y^{lm}$ are respectively the Levi-Civita
connection, the volume form and the unit-normalized spherical harmonics
on $(S^2,\Omega)$, the following tensors define a basis for the even
($Y$) and odd ($X$) vector harmonics:
\begin{equation}
	Y_A = D_A Y \, ,
	\quad
	X_A = \epsilon_{BA} D^B Y \, .
\end{equation}
The tensor spherical harmonic decomposition of a covariant vector field
is then
\begin{align}
	v_\mu &= v_\mu^{\textrm{even}} + v_\mu^{\textrm{odd}}, \\
	v_\mu^{\textrm{even}}
		&\to \sum_{lm}
			\begin{pmatrix}
				v_a^{lm} Y^{lm} \\
				r u^{lm} Y_A^{lm}
			\end{pmatrix} , \\
	v_\mu^{\textrm{odd}}
		&\to \sum_{lm}
			\begin{pmatrix}
				0 \\
				r w^{lm} X_A^{lm}
			\end{pmatrix} .
\end{align}
For conciseness, we can omit the $lm$ indices from now on.

If $\grf_\mu$ is the spacetime Levi-Civita connection on
$(\mathcal{M},\gf)$, the vector wave equation is
\begin{equation}
	\dalf v_\mu := \grf_\nu \grf^\nu v_\mu = 0 \, .
\end{equation}

A separated mode
\begin{equation}
	v_\mu \to \begin{pmatrix}
		v_t (\d t)_a Y +  v_r (\d r)_a Y \\
		u\, r Y_A
	\end{pmatrix}
	e^{-i\omega t}
	+
	\begin{pmatrix}
		0 \\
		w\, r X_A
	\end{pmatrix}
	e^{-i\omega t} \, ,
\end{equation}
where $v_t = v_t(r)$, $v_r = v_r(r)$, $u = u(r)$ and $w = w(r)$,
satisfies the vector wave equation when
\begin{multline}
\label{eq:vwe-coord}
	\sqone_e 
	\begin{bmatrix}
		v_t \\ v_r \\ u
	\end{bmatrix}
	:=
	\begin{bmatrix}
		-\del_r \frac{1}{f} r^2 f\del_r v_t \\
		\del_r f r^2 f \del_r v_r \\
		\del_r \B_l r^2 f\del_r u
	\end{bmatrix}
	+ \left( \frac{\omega^2}{f} - \frac{\B_l}{r^2} \right)
	\begin{bmatrix}
		-\frac{1}{f} r^2 v_t \\
		f r^2 v_r \\
		\B_l r^2  u
	\end{bmatrix}
	\\
	+ i\omega r \frac{f_1}{f}
	\begin{bmatrix}
		 0 & 1 & 0 \\
		-1 & 0 & 0 \\
		 0 & 0 & 0
	\end{bmatrix}
	\begin{bmatrix}
		v_t \\ v_r \\ u
	\end{bmatrix}
	+
	\begin{bmatrix}
		0 & 0 & 0 \\
		0 & -2 f^2 & 2\B_l f \\
		0 & 2\B_l f & \B_l f_1
	\end{bmatrix}
	\begin{bmatrix}
		v_t \\ v_r \\ u
	\end{bmatrix}
	= 0 \, ,
\end{multline}

\begin{equation}\label{eq:vwo-coord}
	\sqone_o w :=
	\del_r \B_l r^2 f \del_r w
	+ \left(\frac{\omega^2}{f} - \frac{\B_l}{r^2}\right) \B_l r^2 w
	+ \B_l f_1 w
	= 0 \, ,
\end{equation}
where for convenience we have defined
\begin{equation}
	\B_l = l(l+1) \, .
\end{equation}
These are our \emph{radial mode equations}.
Note that we have written these equations in manifestly self-adjoint
form, $\sqone_e^* = \sqone_e$ and $\sqone_o^* = \sqone_e$
(cf.~Section~\ref{sec:adjoints}).

We may think of the vector wave equation as the harmonic gauge-fixed
Maxwell equations
\begin{equation}
	\sqf v_\mu = \grf^\nu \left(\grf_\nu v_\mu - \grf_\mu v_\nu\right)
		+ \grf_\mu \grf^\nu v_\nu \, .
\end{equation}
We will take the harmonic gauge condition and the gauge modes to
respectively be
\begin{equation}
	\psi_0 = r\, \grf^\mu v_\mu = 0 \, ,
	\quad
	v_\mu = \grf_\mu \frac{1}{r} \phi_0 \, .
\end{equation}

After the triangular decoupling, we will show that the vector wave
equation is equivalent to a system with \emph{spin-$s$ Regge-Wheeler
(RW)} operators $\D_s$~\cite[Eq.(23)]{rosa-dolan} on the diagonal, where
\begin{equation}
	\D_s \phi := \del_r f \del_r \phi
		- \frac{1}{r^2} [\B_l + (1-s^2)f_1] \phi + \frac{\omega^2}{f} \phi
	= 0 \, .
\end{equation}

In the following calculations, we will freely permit division by
$\omega$ and $\B_l = l(l+1)$. Hence our results will hold for the
$\omega \ne 0$, $l>0$ modes. A separate treatment would be required when
either of those conditions does not hold.

\subsection{Odd sector}
This sector is particularly simple, since it does not contain any gauge
modes and is not constrained by the harmonic gauge condition. The
transformation
\begin{equation}
	\phi_1 = -i\omega r w ,
	\quad
	w = -\frac{1}{i\omega r} \phi_1 ,
\end{equation}
puts $\sqone_o w = 0$ in direct equivalence with a $s=1$ RW equation:
\begin{equation} \label{eq:vwo-final-triang}
	\sqone_o w = \B_l i\omega r \frac{1}{\omega^2} \D_1 \phi_1 ,
	\quad
	\frac{1}{\omega^2} \D_1 \phi_1 = \frac{1}{\B_l i\omega r} \sqone_o w .
\end{equation}
In diagrammatic form, we have
\begin{equation}
\begin{tikzcd}[column sep=huge,row sep=huge]
	\bullet
		\ar[swap]{d}{\sqone_o}
		\ar[shift left]{r}{-i\omega r}
		\&
	\bullet
		\ar[shift left]{l}{-\frac{1}{i\omega r}}
		\ar{d}{\frac{1}{\omega^2} \D_1}
		\\
	\bullet
		\ar[shift left]{r}{\frac{1}{\B_l i\omega r}}
		\&
	\bullet
		\ar[shift left]{l}{\B_l i\omega r}
\end{tikzcd} \, ,
	\quad
	\begin{aligned}
		\left(-\frac{1}{i\omega r}\right) (-i\omega r) &= \id \, , \\
		\left(\frac{1}{\B_l i\omega r}\right)
			\left(\B_l i\omega r\right) &= \id \, .
	\end{aligned}
\end{equation}

\subsection{Even sector}
This sector is more complicated and is the first prototype for the
abstract approach to triangular decoupling that we outlined earlier in
Section~\ref{sec:decoupling}. Let us now explicitly introduce some of
the auxiliary operators that we will need, following the notation
established earlier:
\begin{align*}
	E &:= \sqone_e \, , \\
	D &:= \frac{1}{\omega^2 r^2}
		\begin{bmatrix}
		-i\omega r \\
		r^2 \del_r \frac{1}{r} \\
		1
		\end{bmatrix} \, , &
	T^* &:= \begin{bmatrix}
		\frac{i\omega r}{f} \\
		f r^2 \del_r \frac{1}{r} \\
		\B_l
	\end{bmatrix} \, , \\
	T &:= \begin{bmatrix}
		-\frac{i\omega r}{f} &
		-\frac{1}{r} \del_r f r^2 &
		\B_l
		\end{bmatrix} \, , &
	D^* &:= \frac{1}{\omega^2 r^2}
		\begin{bmatrix}
		i\omega r &
		-r\del_r &
		1
		\end{bmatrix} \, , \\
	\D_D &:= \frac{1}{\omega^2} \D_0 \, , &
	\D_D^* &:= \frac{1}{\omega^2} \D_0 \, , \\
	\Phi_0 &:= \begin{bmatrix}
		0 & -f & f\del_r r
		\end{bmatrix} \, , &
	\bar{\Phi}_0^* &:= \frac{1}{r^2}
		\begin{bmatrix}
			0 & -1 & \frac{1}{\B_l} r^2\del_r \frac{f}{r}
		\end{bmatrix} \, , \\
	\D_\Phi &:= \frac{1}{\omega^2} \D_1 \, , &
	R &= \bar{R} := -\id \, .
\end{align*}
They can be obtained from a mode decomposition of the spacetime
operators we discussed earlier, up to simple multiplicative
normalizations. We have chosen the normalizations so that all operators
have dimensionless components, under the convention that each of
$r^{-1}$, $\omega$, and $\del_r$ has the dimension of inverse length,
while $f$, $f_1$ and $\B_l$ are dimensionless. The operators
\begin{equation}
	\begin{bmatrix}
		v_t \\
		v_r \\
		u
	\end{bmatrix}
	= D[\phi_0] \, ,
	\quad
	\phi_1 = \Phi_0 \begin{bmatrix}
		v_t \\
		v_r \\
		u
	\end{bmatrix} \, ,
	\quad
	\psi_0 = T\begin{bmatrix}
		v_t \\
		v_r \\
		u
	\end{bmatrix}
\end{equation}
respectively separate out the \emph{gauge modes} ($\phi_0$), the
\emph{gauge invariant modes} ($\phi_1$) and the \emph{constraint
violating modes} ($\psi_0$). 

Computing the composition
\begin{equation*}
	\bar{\Delta}_\Phi := \bar{\Phi}_0^*\circ T^* \circ \bar{R}
	= -\frac{f_1}{\omega^2 r^2}
\end{equation*}
is enough to get us to the $E$-$T$-$\Phi_0$ triangular
form~\eqref{eq:E-T-Phi0-triang}.

Next, we must invert the morphism $D$ of gauge modes satisfying
$\D_D[\phi_0] = 0$ into the $E$-$T$-$\Phi_0$ system. At this step it is
helpful to use a computer algebra system. Any equivalence class of
scalar operators acting on the $v_t,v_r,u$ variables modulo the
$E$-$T$-$\Phi_0$ equations can be uniquely represented by a linear
combination of $v_t$ and $\del_r v_t$ (all second and higher derivatives
are eliminated by $E$, $\del_r v_r$ is eliminated by $T$ and $\del_r u$
is eliminated by $\Phi_0$). Precomposing each of them with $D$, we get
representatives of equivalence classes of scalar operators acting on the
$\phi_0$ variable modulo the $\D_D$ equation. These, in turn, are
uniquely represented by $\phi_0$ and $\del_r \phi_0$. Algebraically
inverting the relationship between $v_t, \del_r v_t$ and $\phi_0,\del_r
\phi_0$ gives us the operator $\bar{D}_0$
from~\eqref{eq:E-T-Phi0-triang}. Keeping track of how higher derivatives
are eliminated by the equations also allows us to complete $\bar{D}_0$
to a cochain map. The result is
\begin{align*}
	\bar{D}_0 &= \begin{bmatrix}
		i\omega r & 0 & 0
		\end{bmatrix} \, , \\
	\begin{bmatrix}
		\bar{T}_0^* & -\Delta_T & -\Delta_\Phi
	\end{bmatrix}
	&= \frac{1}{\omega^2 r^2} \begin{bmatrix}[ccc|c|c]
		-i\omega r f & -f_1 & 0 &
		-f_1 f r^2 \del_r \frac{1}{r} & 
		\B_l f_1
	\end{bmatrix} \, .
\end{align*}

Further explicit calculations give us the homotopy corrections needed to
proceed with the decoupling:
\begin{align}
	h_D &= 0 \, ,
	\\
	\begin{bmatrix}
		\bar{h}_E & \bar{T}_2 & \bar{\Phi}_2
	\end{bmatrix}
	&= \frac{1}{\omega^2 r^2} \begin{bmatrix}[ccc|c|c]
		0 & 0 & 0 &  0 & 0 \\
		0 & 1 & 0 &  fr^2\del_r\frac{1}{r} & -\B_l \\
		0 & 0 & \frac{f}{\B_l} &  f & -fr\del_r
	\end{bmatrix} \, ,
	\\
	\bar{n}
	&= \begin{bmatrix}
		0 & 0 \\
		-fr^2\del_r \frac{f}{r} & \B_l f \\
		-\B_l f & \B_l r\del_r f \\
		\cmidrule(lr){1-2}
		f & 0 \\
		\cmidrule(lr){1-2}
		0 & f
	\end{bmatrix}
	\begin{bmatrix}
		-D^* & \D_D^* & 0 \\
		-\bar{\Phi}_0^* & \bar{\Delta}_\Phi & \D_\Phi
	\end{bmatrix} \, ,
	\\
	\begin{bmatrix}
		\Phi_2^* & \bar{D}_2^* \\
		\bar{m}_{T,\Phi} & \bar{m}_{T,T} \\
		\bar{m}_{\Phi,\Phi} & \bar{m}_{\Phi,T} 
	\end{bmatrix}
	&= \begin{bmatrix}
		0 & 0 \\
		-\B_l f & fr^2\del_r \frac{f}{r} \\
		-\B_l r\del_r f & \B_l f \\
		\cmidrule(lr){1-2}
		0 & -f \\
		\cmidrule(lr){1-2}
		-f & 0
	\end{bmatrix} \, .
\end{align}

From the above operators, we can construct the desired triangular
decoupling~\eqref{eq:triang-form} of $E$:
\begin{align}
	\begin{bmatrix}
		\bar{D}_0 \\
		\Phi_0 \\
		T
	\end{bmatrix}
	&=
	\begin{bmatrix}
		i\omega r & 0 & 0 \\
		\cmidrule(lr){1-3}
		0 & -f & f\del_r r \\
		\cmidrule(lr){1-3}
		-\frac{i\omega r}{f} & -\frac{1}{r} \del_r f r^2 & \B_l
	\end{bmatrix} \, ,
	\\
	\begin{bmatrix}
		\bar{T}_0^* \\
		\bar{\Phi}_0^* \\
		D^*
	\end{bmatrix}
	&= \frac{1}{\omega^2 r^2} \begin{bmatrix}
		-i\omega r f & -f_1 & 0 \\
		\cmidrule(lr){1-3}
		0 & -1 & \frac{1}{\B_l} r^2 \del_r \frac{f}{r} \\
		\cmidrule(lr){1-3}
		i\omega r & -r\del_r & 1
	\end{bmatrix} \, ,
	\\
	\begin{bmatrix}
		D & \bar{\Phi}_2 & \bar{T}_2
	\end{bmatrix}
	&= \frac{1}{\omega^2 r^2} \begin{bmatrix}[c|c|c]
		-i\omega r & 0 & 0 \\
		r^2 \del_r \frac{1}{r} & -\B_l & fr^2\del_r\frac{1}{r} \\
		1 & -fr\del_r & f
	\end{bmatrix} \, ,
	\\
	\begin{bmatrix}
		T^* &
		\Phi_2^* &
		\bar{D}_2^*
	\end{bmatrix}
	&= \begin{bmatrix}[c|c|c]
		\frac{i\omega r}{f} & 0 & 0 \\
		f r^2 \del_r \frac{1}{r} & -\B_l f & fr^2\del_r \frac{f}{r} \\
		\B_l & -\B_l r\del_r f & \B_l f
	\end{bmatrix} \, .
\end{align}

An explicit calculation also shows that the cochain maps that we have
obtained are mutually inverse up to homotopy:
\begin{align}
	\begin{bmatrix}
		\bar{D}_0 \\
		\Phi_0 \\
		T
	\end{bmatrix}
	\circ
	\begin{bmatrix}
		D & \bar{\Phi}_2 & \bar{T}_2
	\end{bmatrix}
	&= \id - \begin{bmatrix}
		0 & 0 & 0 \\
		0 & f & 0 \\
		1 & 0 & f
	\end{bmatrix}
	\circ
	\begin{bmatrix}
		\D_D & \Delta_\Phi & \Delta_T \\
		0 & \D_\Phi & \bar{\Delta}_\Phi \\
		0 & 0 & \D_D^*
	\end{bmatrix} \, ,
	\\
	\begin{bmatrix}
		\bar{D}_0 \\
		\Phi_0 \\
		T
	\end{bmatrix}
	\circ
	\begin{bmatrix}
		D & \bar{\Phi}_2 & \bar{T}_2
	\end{bmatrix}
	&= \id - \frac{1}{\omega^2 r^2} \begin{bmatrix}
		0 & 0 & 0 \\
		0 & 1 & 0 \\
		0 & 0 & \frac{f}{\B_l}
	\end{bmatrix} \circ \sqone_e \, .
\end{align}

Noting the helpful identities
\begin{gather}
	\frac{1}{\omega^2} \D_{s_1} \frac{1}{(s_1^2-s_2^2)}
		= \frac{f_1}{\omega^2 r^2} + \frac{1}{(s_1^2-s_2^2)} \frac{1}{\omega^2} \D_{s_2} \, ,
	\\
	\frac{1}{\omega^2} \D_0 \left(\frac{f}{2}\right)
	= \left(\frac{f_1}{\omega^2 r^2} f r^2 \del_r \frac{1}{r}
			+ \frac{f_1^2}{2\omega^2 r^2}\right)
		+ \left(\frac{f}{2}\right) \frac{1}{\omega^2} \D_0
	\, ,
\end{gather}
we can use the idea of Section~\ref{sec:triang-reduce} to further reduce
the triangular form with the following operators:
\begin{align*}
	\bar{\delta}_\Phi &= -\id \, , &
		\delta_\Phi &= \B_l \, , &
		\delta_T &= \frac{f}{2} \, , \\
	\bar{\eps}_\Phi &= -\id \, , &
		\eps_\Phi &= \B_l \, , &
		\eps_T &= \frac{f}{2} \, ,
\end{align*}
which are needed to kill or simplify the off-diagonal terms
\begin{align}
\label{eq:vwe-bDelta-Phi}
	\bar{\Delta}_\Phi &= -\frac{f_1}{\omega^2 r^2} \, , \\
\label{eq:vwe-Delta-Phi}
	\Delta_\Phi &= -\frac{f_1}{\omega^2 r^2} \B_l \, , \\
\label{eq:vwe-Delta-T}
	\Delta_T &= \frac{f_1}{\omega^2 r^2} f r^2 \del_r \frac{1}{r} \, , \\
	\text{and} \quad
	\bar{\Delta}_T &= \Delta_T + \eps_\Phi \circ \bar{\Delta}_\Phi
	= \frac{f_1}{\omega^2 r^2}
		\left( f r^2 \del_r \frac{1}{r} - \B_l \right) \, .
\end{align}
The remaining off-diagonal term in the triangular
form~\eqref{eq:final-triang} is then
\begin{equation}
	\Delta
	= \bar{\Delta}_T
		- \left(\frac{f_1}{\omega^2 r^2} f r^2 \del_r \frac{1}{r}
			+ \frac{f_1^2}{2\omega^2 r^2}\right)
	= -\frac{f_1}{\omega^2 r^2} \left(\B_l + \frac{f_1}{2}\right) \, ,
\end{equation}
so that our final reduced triangular form is
\begin{equation} \label{eq:vwe-final-triang}
	\sqone_e
	\begin{bmatrix}
		v_t \\ v_r \\ u
	\end{bmatrix}
	= 0
	\quad \iff \quad
	\frac{1}{\omega^2} \begin{bmatrix}
		\D_0 & 0 & -\frac{f_1}{r^2} \left(\B_l + \frac{f_1}{2}\right) \\
		0 & \D_1 & 0 \\
		0 & 0 & \D_0
	\end{bmatrix}
	\begin{bmatrix}
		\phi_0 \\ \phi_1 \\ \psi_0
	\end{bmatrix}
	= 0
	\, .
\end{equation}

The following operators will provide us with the final pieces necessary
to construct a homotopy equivalence between $\sqone_e$ and the reduced
triangular form~\eqref{eq:vwe-final-triang}:
\begin{align}
	\begin{bmatrix}
		\phi_0 \\ \phi_1 \\ \psi_0
	\end{bmatrix}
	=
	\begin{bmatrix}
		\bar{D} \\ \Phi \\ T
	\end{bmatrix}
	\begin{bmatrix}
		v_t \\ v_r \\ u
	\end{bmatrix}
	&= \begin{bmatrix}
		1 & \B_l & \frac{f}{2} \\
		0 & 1 & -1 \\
		0 & 0 & 1
	\end{bmatrix}
	\circ
	\begin{bmatrix}
		i\omega r & 0 & 0 \\
		\cmidrule(lr){1-3}
		0 & -f & f\del_r r \\
		\cmidrule(lr){1-3}
		-\frac{i\omega r}{f} & -\frac{1}{r} \del_r f r^2 & \B_l
	\end{bmatrix}
	\begin{bmatrix}
		v_t \\ v_r \\ u
	\end{bmatrix}
	\, ,
	\\
	\begin{bmatrix}
		\bar{T}_1 \\ \bar{\Phi}_1^* \\ D^*
	\end{bmatrix}
	&= \begin{bmatrix}
		1 & \B_l & \frac{f}{2} \\
		0 & 1 & -1 \\
		0 & 0 & 1
	\end{bmatrix}
	\circ
	\frac{1}{\omega^2 r^2} \begin{bmatrix}
		-i\omega r f & -f_1 & 0 \\
		\cmidrule(lr){1-3}
		0 & -1 & \frac{1}{\B_l} r^2 \del_r \frac{f}{r} \\
		\cmidrule(lr){1-3}
		i\omega r & -r\del_r & 1
	\end{bmatrix} \, ,
	\\
	\begin{bmatrix}
		v_t \\ v_r \\ u
	\end{bmatrix}
	=
	\begin{bmatrix}
		D & \bar{\Phi} & \bar{T}
	\end{bmatrix}
	\begin{bmatrix}
		\phi_0 \\ \phi_1 \\ \psi_0
	\end{bmatrix}
	&=
	\frac{1}{\omega^2 r^2} \begin{bmatrix}[c|c|c]
		-i\omega r & 0 & 0 \\
		r^2 \del_r \frac{1}{r} & -\B_l & fr^2\del_r\frac{1}{r} \\
		1 & -fr\del_r & f
	\end{bmatrix}
	\circ
	\begin{bmatrix}
		1 & -\B_l & -\frac{f}{2}-\B_l \\
		0 & 1 & 1 \\
		0 & 0 & 1
	\end{bmatrix}
	\begin{bmatrix}
		\phi_0 \\ \phi_1 \\ \psi_0
	\end{bmatrix}
	\, ,
	\\
	\begin{bmatrix}
		T^* & \Phi_1^* & \bar{D}_1^*
	\end{bmatrix}
	&=
	\begin{bmatrix}[c|c|c]
		\frac{i\omega r}{f} & 0 & 0 \\
		f r^2 \del_r \frac{1}{r} & -\B_l f & fr^2\del_r \frac{f}{r} \\
		\B_l & -\B_l r\del_r f & \B_l f
	\end{bmatrix}
	\circ
	\begin{bmatrix}
		1 & -\B_l & -\frac{f}{2}-\B_l \\
		0 & 1 & 1 \\
		0 & 0 & 1
	\end{bmatrix} \, .
\end{align}

\section{Discussion}\label{sec:discussion}

We have shown how an abstract strategy, based on formal properties of
linear differential operators, can be used to decouple the radial mode
equations (with nonzero frequency $\omega$ and angular momentum $l$) of
the vector wave equation on Schwarzschild spacetime into a system of
hierarchically (that is \emph{triangularly}) coupled scalar equations.
Essentially, we have explicitly separated out the degrees of freedom of
the vector field into \emph{pure gauge}, \emph{gauge invariant} and
\emph{constraint violating} modes (with the interpretation that the
vector wave equation arose as the harmonic (or Lorenz) gauge fixed
Maxwell equations). This strategy was first successfully applied
in~\cite{berndtson}. However, it should be noted that this decoupling
strategy is never made explicit in~\cite{berndtson}, where it becomes
apparent only as a common pattern in the voluminous explicit
calculations in a sequence of examples of increasing complexity. One of
the main contributions of our work is to make this decoupling strategy
explicit and mathematically founded on basic ideas drawn from the theory
of $D$-modules~\cite[Sec.10.5]{seiler}, the category theoretic approach
to differential equations~\cite[Sec.VII.5]{vinogradov} and homological
algebra~\cite{weibel}.

It is worth pointing out that, even though our approach to triangular
decoupling and reduction of the triangular form owes much
to~\cite{berndtson}, the triangular form~\eqref{eq:vwe-final-triang}
appears in~\cite{berndtson} only implicitly. Moreover, the reduction of
the triangular form in~\cite{berndtson} only kills the
$\Delta_\Phi$~\eqref{eq:vwe-Delta-Phi} off-diagonal term. Since we have
also killed the $\bar{\Delta}_\Phi$~\eqref{eq:vwe-bDelta-Phi} and
simplified the $\Delta_T$~\eqref{eq:vwe-Delta-T} terms, our decoupling
goes a step further.

The next step is to apply the same decoupling strategy to linearized
gravity in harmonic (or de~Donder) gauge on the Schwarzschild spacetime.
In fact, this strategy was already successfully applied to that case
in~\cite{berndtson}, which was actually one of the main goals of that
work. Unfortunately, this decoupling for linearized gravity was achieved
only after some rather voluminous explicit calculations and the final
results were not presented in a very economic way. We believe that
following the abstract strategy in Section~\ref{sec:decoupling}, can
significantly streamline both the presentation of the final results and
the intermediate calculations. The details will be discussed elsewhere.

We believe that the same strategy, of triangular decoupling by
separating out \emph{purge gauge}, \emph{gauge invariant} and
\emph{constraint violating} modes should also work for harmonic gauge
field equations (either Maxwell or linearized gravity) on other
spacetime geometries where the physical degrees of freedom satisfy
decoupled equations. The Kerr spacetime is an example, where at the mode
level the gauge invariant degrees of freedom decouple and satisfy
Teukolsky or Fackerell-Ipser equations~\cite{chandrasekhar,
frolov-novikov}. The main uncertainty there is the degree to which the
triangular form can be explicitly reduced using the method of
Section~\ref{sec:triang-reduce}.

As was explained in the Introduction, the complicated form of the radial
mode equations~\eqref{eq:vwe-coord} and~\eqref{eq:vwo-coord} of the
vector wave equation is a significant obstacle to the study of their
analytical properties, and in particular their spectral theory. Once
these equations have been put into the decoupled triangular
form~\eqref{eq:vwe-final-triang} and~\eqref{eq:vwo-final-triang}, their
analytical properties can be more easily deduced from their diagonal
parts. These diagonal parts consist of spin-$s$ Regge-Wheeler equations,
whose analytical properties are well understood~\cite{chandrasekhar}.
Such information can then be used to construct both classical and
quantum propagators (or Green functions) for the vector wave equation on
Schwarzschild spacetime. The same will be true for linearized gravity,
once a similar decoupling has been achieved. For that purpose, the
$\omega = 0$ and $l=0$ modes would need to be investigated as well.
These applications will also be discussed elsewhere.

Finally, it would also be interesting to relate our decoupling strategy
to the \emph{adjoint operator method} of Wald~\cite{wald-adjoint,
ab-adjoint}. Recall that we have liberally used formal adjoints
(Section~\ref{sec:adjoints}) as part of our notation. This was
convenient because some of the operators that we used in the case of the
vector wave equation, in Section~\ref{sec:vw}, really were formal
adjoints of each other. However, we did not really use this property as
part of our decoupling strategy. On the other hand, supposing that we
can find a decoupling of a formally self-adjoint differential equation
$e=e^*$ into a triangular form $\bar{e}$, taking adjoints produces for
free another decoupling of $e$ but this time into $\bar{e}^*$. This is
illustrated in the following commutative diagrams:
\begin{equation}
\begin{tikzcd}[column sep=huge,row sep=huge]
	\bullet
		\ar[swap]{d}{e}
		\ar[shift left]{r}{k}
		\&
	\bullet
		\ar[shift left]{l}{\bar{k}}
		\ar{d}{\bar{e}}
		\\
	\bullet
		\ar[shift left]{r}{g}
		\&
	\bullet
		\ar[shift left]{l}{\bar{g}}
\end{tikzcd}
	\quad \iff \quad
\begin{tikzcd}[column sep=huge,row sep=huge]
	\bullet
		\ar[swap]{d}{e=e^*}
		\ar[shift left]{r}{\bar{g}^*}
		\&
	\bullet
		\ar[shift left]{l}{g^*}
		\ar{d}{\bar{e}^*}
		\\
	\bullet
		\ar[shift left]{r}{\bar{k}^*}
		\&
	\bullet
		\ar[shift left]{l}{k^*}
\end{tikzcd} \, .
\end{equation}
If $\bar{e}$ is in upper triangular form, then $\bar{e}^*$ is in lower
triangular form. In the vector wave equation
example~\eqref{eq:vwe-final-triang}, the adjoint equation $\bar{e}^*$ is
actually equivalent to the original equation $\bar{e}$ by the simple
interchange $\phi_0 \leftrightarrow \psi_0$. In such a case, composing
the equivalences of $e$ with $\bar{e}$, of $\bar{e}$ with $\bar{e}^*$,
and of $\bar{e}^*$ with $e^* = e$, we obtain a morphism from $e$ to
itself, or more precisely a pair of operators $\tilde{k}, \tilde{g}$
satisfying $e\circ \tilde{k} = \tilde{g} \circ e$. In the terminology
of~\cite{ab-adjoint}, $\tilde{k}$ is a \emph{symmetry operator} and it
maps solutions of $e$ to solutions of $e$ (in this case,
isomorphically). Thus, if $\tilde{k}$ is not simply proportional to a
constant, it gives an interesting way to generate new solutions from
known ones. The significance of the existence of such a symmetry
operator should be investigated further.

\section*{Acknowledgments}
The author was in part supported by the National Group of Mathematical
Physics (GNFM-INdAM). The author also thanks Francesco Bussola for
checking some of this paper's calculations as part of his MSc
thesis~\cite{bussola}.

\section*{References}

\bibliographystyle{iopart-num}
\bibliography{paper-grprop}

\end{document}